\newcommand{\xuadd}{}
\newcommand{\forget}[1]{}
\newtheorem{lemma}{Lemma}
\newtheorem{define}{Definition}
\newtheorem{theorem}{Theorem}
\newtheorem{property}{Property}
\newcommand{\summation}{\sum}
\newcommand{\DSAI}{\textsf{\upshape{D-SAI}}}
\newcommand{\GLI}{\textsf{\upshape{G-LI}}}
\newcommand{\GMEL}{\textsf{\upshape{G-MEL}}}
\newcommand{\FLI}{\textsf{\upshape{F-LI}}} 
\newcommand{\DXU}{\textsf{\upshape{D-XU}}} 
\newcommand{\firstSFS}{\textsf{\upshape{SF[x+1]}}} 
\newcommand{\secondSFS}{\textsf{\upshape{SF[x+2]}}} 
\newcommand{\Sched}{\textsf{\upshape{Sched}}} 
\newcommand{\Partition}{\textsf{\upshape{Partition}}} 
\newcommand{\Scrape}{\textsf{\upshape{Scrape}}} 
\newcommand{\platform}{\pi}
\newcommand{\speed}{{{\delta}}}
\newcommand{\processor}{P}
\newcommand{\response}{R}
\newcommand{\sumspeed}{S}
\newcommand{\platfeature}{{{\lambda}}}
\newcommand{\consume}{\chi}
\begin{document}

\IEEEoverridecommandlockouts

\title{Semi-Federated Scheduling of Parallel Real-Time Tasks on Multiprocessors}

\author{\IEEEauthorblockN{Xu Jiang$^{1,2}$, Nan Guan$^1$, Xiang Long$^{2}$, Wang Yi$^3$}
\IEEEauthorblockA{
	~~\\
$^1$The Hong Kong Polytechnic University, Hong Kong\\
$^2$Beihang University, China\\
$^3$Uppsala University, Sweden}}

\maketitle

\begin{abstract}
Federated scheduling is a promising approach to schedule parallel real-time tasks on multi-cores,
where each heavy task exclusively executes on a number of dedicated processors, while light tasks are treated as sequential sporadic tasks and share the remaining processors. 
However, federated scheduling suffers resource waste 
since a heavy task with processing capacity requirement $x + \epsilon$ (where $x$ is an integer and $0 < \epsilon < 1$) 
needs $x + 1$ dedicated processors. In the extreme case, almost half of the processing capacity is wasted. In this paper we propose the
semi-federate scheduling approach, which only grants $x$ dedicated processors to a heavy task with processing capacity requirement $x + \epsilon$,
and schedules the remaining $\epsilon$ part together with light tasks on shared processors. Experiments
with randomly generated task sets show the semi-federated scheduling approach significantly outperforms not only federated scheduling, but also all existing approaches for scheduling parallel real-time tasks on multi-cores.
\end{abstract}
\IEEEpeerreviewmaketitle
\section{Introduction}

Multi-cores are more and more widely used in real-time systems, to meet their rapidly increasing requirements in performance and energy efficiency.
The processing capacity of multi-cores is not a free lunch. Software must be properly parallelized to fully exploit the computation capacity of multi-core processors.
Existing scheduling and analysis techniques for sequential real-time tasks are hard to 
migrate to the parallel workload setting.  
New scheduling and analysis techniques are required to deploy parallel real-time tasks on multi-cores.

A parallel real-time task is usually modeled as a
Directed Acyclic Graph (DAG).
Several scheduling algorithms have been proposed to schedule DAG tasks in recent years, among which \textit{Federated Scheduling}  \cite{li2014analysis} is a promising approach with both
good real-time performance and high flexibility.
In federated scheduling, DAG tasks are classified into \textit{heavy} tasks (density $>1$) and \textit{light} tasks (density $\leq 1$). Each heavy task exclusively executes
on a subset of dedicated processors.
Light tasks are treated as traditional sequential real-time tasks and share the remaining processors. Federated scheduling
not only can schedule a large portion of DAG task systems that is not schedulable by other approaches, but also provides the best quantitative worst-case performance guarantee \cite{li2014analysis}.
On the other hand, federated scheduling allows flexible workload specification as the underlying analysis techniques only require information about the critical path length and total workload of the DAG, and thus can be easily extended to more expressive models, such as DAG with conditional branching \cite{baruah2015global,melani2015response}.

However, federated scheduling may suffer significant resource waste, since each heavy task \textit{exclusively} owns a subset of processors. For example, if a heavy task requires processing capacity $x+\epsilon$ (where $x$ is an integer and $0 < \epsilon < 1$), then $\lceil x+\epsilon \rceil = x+1$ dedicated processors are granted to it, as shown in Figure \ref{fig:fedandsemi}-(a). In the extreme case, almost half of the total processing capacity is wasted
(when a DAG requires $1+\epsilon$ processing capacity and $\epsilon  \rightarrow 0$).

\begin{figure}[!t]
	\centering
	\subfigure[Federated scheduling.]{\includegraphics[width=1.8in]{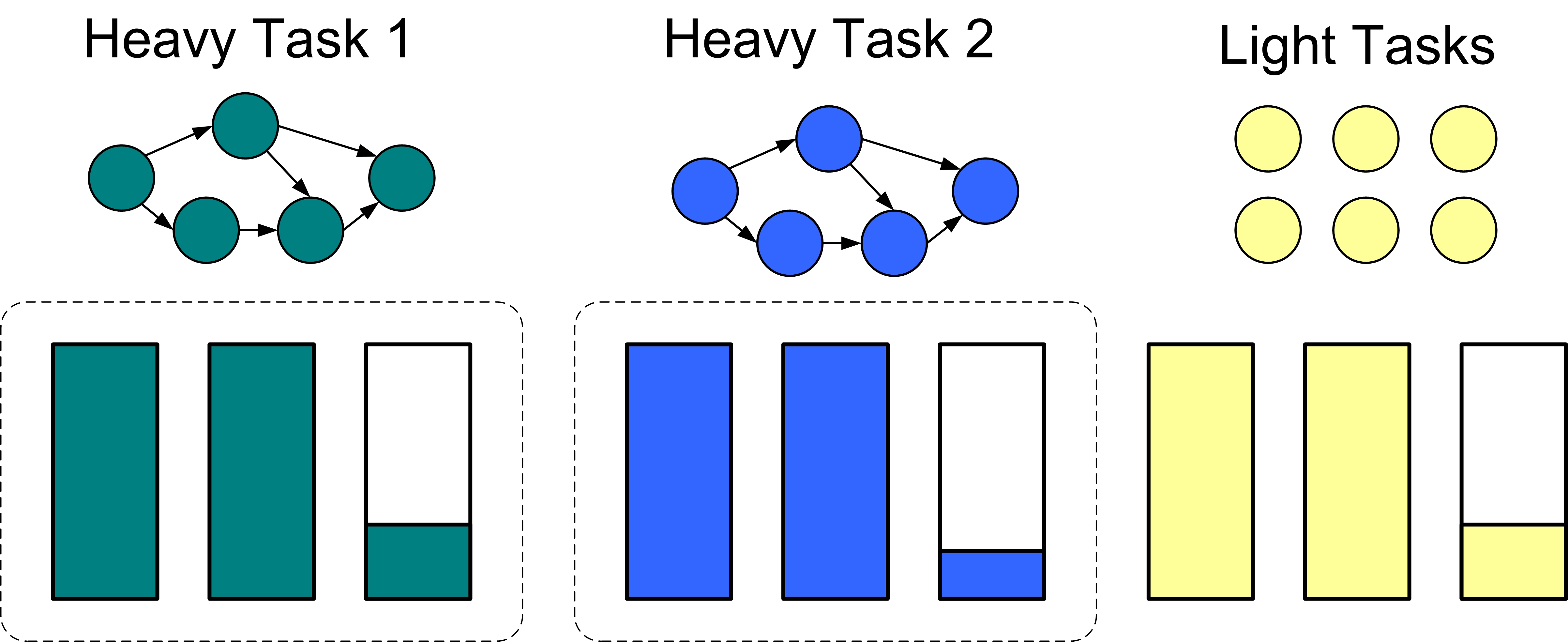}}
	\hspace{0.08in}
	\subfigure[Semi-federated scheduling.]{\includegraphics[width=1.4in]{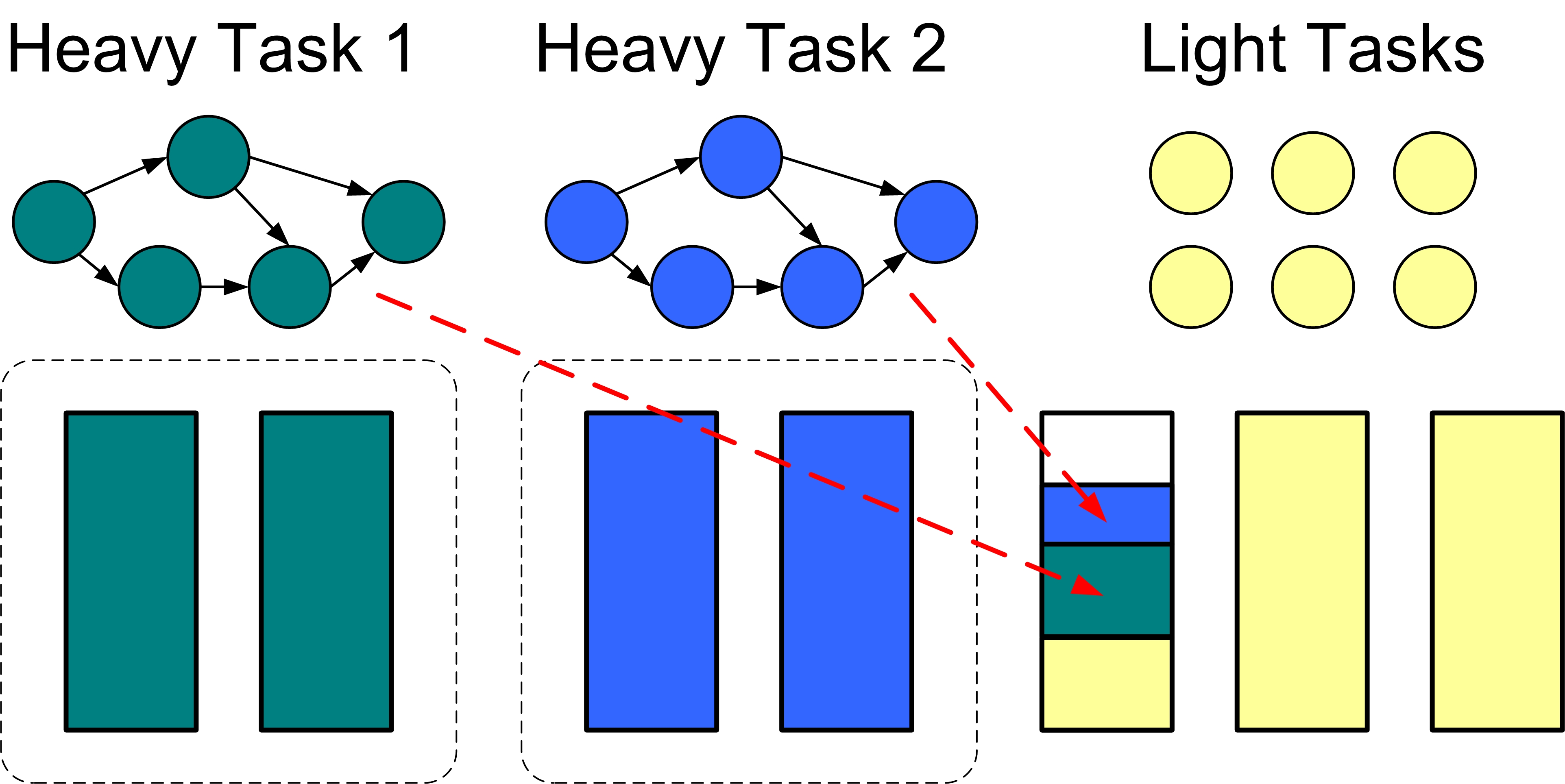}}
	\caption{Illustration of federated scheduling and semi-federated scheduling.} 
	\label{fig:fedandsemi}
\end{figure}

In this work, we propose the \textit{Semi-Federated Scheduling} approach to solve the above resource waste problem. In semi-federated scheduling, a DAG task requiring $x+\epsilon$ processing capacity is only granted $x$ dedicated processors, and the remaining fractional part $\epsilon$ is scheduled together with the light tasks, as illustrated in Figure \ref{fig:fedandsemi}-(b).

The major challenge we face in realizing semi-federated scheduling is how to control and analyze the interference suffered by the fractional part, and its effect to the timing behavior of the entire heavy task. The fractional part of a heavy task is scheduled together with,
and thus suffers interference from
the light tasks and the fractional parts of other heavy tasks. Due to the intra-task dependencies inside a DAG, this interference is propagated to 
other parts of the DAG executed on the dedicated processors, and thus affects the timing behavior of the entire DAG task. 
Existing scheduling and analysis techniques for federated scheduling (based on the classical work in \cite{graham1969bounds}) cannot handle such extra interference.
%

This paper addresses the above challenges and develops semi-federated scheduling algorithms in the following steps.

%
%
%
%
%
First, we study the problem of bounding the response time of an individual DAG executing on a \textit{uniform} multiprocessor platform (where processors have different speeds). The results 
we obtained for this problem
serve as the theoretical foundation of the semi-federated scheduling approach.
Intuitively, we grant a portion ($<1$) of the processing capacity of a processor to execute the fractional part of a DAG, which is similar to executing it on a slower processor.

Second, the above results are transferred to the realistic situation where
the fractional parts of DAG tasks and 
the light tasks share several processors with unit speed. This is realized by executing the fractional parts via sequential \textit{container tasks}, each of which has a \textit{load bound}. A container task plays the role of a dedicated processor with a slower speed (equals the container task's load bound), and thus the above results can be applied to analyze the response time of the DAG task.
	
Finally, we propose two semi-federated scheduling algorithms based on the above framework.
	In the first algorithm, a DAG task requiring $x + \epsilon$ processing capacity is granted $x$ dedicated processors and \textit{one} container task with load bound $\epsilon$, and all the container tasks and the light tasks are scheduled by \textit{partitioned} EDF on the remaining processors. The second algorithm enhances the first one by allowing to divide the fractional part $\epsilon$ into \textit{two} container tasks, which further improves resource utilization.

We conduct experiments with randomly generated workload, which show our semi-federated scheduling algorithms significantly improve schedulability
over the state-of-the-art of, not only federated scheduling, but also the other types such as global scheduling and decomposition-based scheduling.

\section{Preliminary}\label{s:pre}

\subsection{Task Model}\label{ss:task-model}
We consider a task set $\tau$ consisting of $n$ tasks $\{\tau_1,\tau_2,...,\tau_n\}$,
executed on $m$ identical processors with unit speed.
Each task is represented by a DAG, with a period $T_i$ and a relative deadline $D_i$.
We assume all tasks to have \textit{constrained deadlines}, i.e., $D_i \leq T_i$.
Each task is represented by a directed acyclic graph (DAG).
A vertex $v$ in the DAG has a WCET $c(v)$.
Edges represent dependencies among vertices. A directed edge from vertex $v$ to $u$ means that $u$ can only be executed after $v$ is finished.
In this case, $v$ is a \emph{predecessor} of $u$, and $u$ is a \emph{successor} of $v$.
We say a vertex is \textit{eligible} at some time point if all its predecessors in the current release have been finished and thus it can immediately execute if there are available processors.
We assume each DAG has a unique head vertex (with no predecessor) and a unique tail vertex (with no successor). This assumption does not limit the expressiveness of our model since one can always add a dummy head/tail vertex to a DAG with multiple entry/exit points.

%

\begin{figure}[!t]
	\centering
	\includegraphics[width=2.4in]{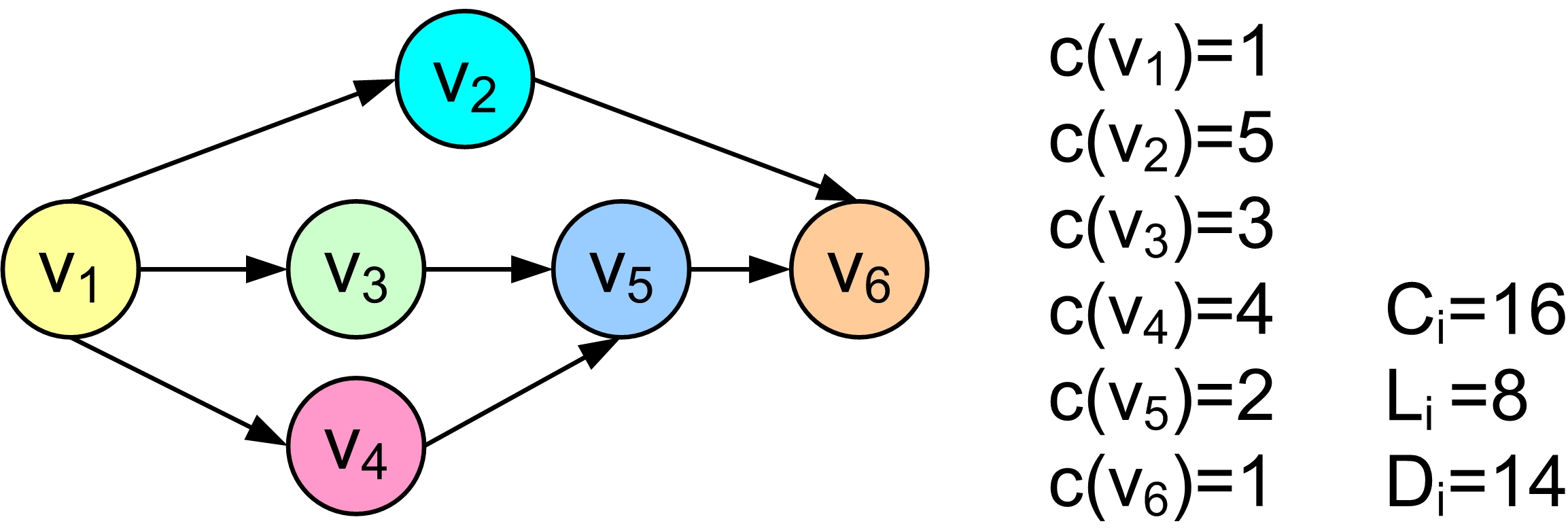}
	\caption{\xuadd{A DAG task example.}}
	\label{fig1}
\end{figure}

We use $C_i$ to denote the total worst-case execution time of all vertices of DAG task $\tau_i$ and $L_i$ to denote the sum of $c(v)$ of each vertex $v$ along the longest chain (also called the critical path) of $\tau_i$.
The \textit{utilization} of a DAG task $\tau_i$ 
is  $\frac{C_i}{T_i}$, and its
\textit{density} is $\frac{C_i}{D_i}$.
A DAG task 
is called a \textit{heavy} task if it density
is larger than $1$, and a \textit{light} task otherwise.

Figure \ref{fig1} shows a DAG task with $6$ vertices. We can compute $C_i = 16$ 
and $L_i = 8$ (the longest path is $\{v_1, v_4, v_5 , v_6\}$). This is a heavy task since the density is $\frac{16}{14} > 1$.




%

\subsection{Federated Scheduling}

In federated scheduling \cite{li2014analysis}, each heavy task exclusively executes
on a subset of dedicated processors. Light tasks are treated as traditional sequential real-time tasks and share the remaining processors. As a heavy task exclusively owns several dedicated processors and its workload must be finished before the next release time (due to constrained deadlines), the response time of a heavy task can be bounded using the classical result for non-recurrent DAG tasks by Graham \cite{graham1969bounds}:
\begin{equation}\label{e:fedbound}
R_i \leq L_i+\frac{C_i-L_i}{m_i}
\end{equation}
where $m_i$ is the number of dedicated processors granted to this heavy task. Therefore,  by setting $R_i = D_i$, we can calculate the minimal amount of processing capacity required by this task to meet its deadline $\frac{C_i - L_i}{D_i - L_i}$, and the number of processors assigned to a heavy task $\tau_i$ is the minimal integer no smaller than 
$\left \lceil\frac{C_i-L_i}{D_i-L_i} \right\rceil$.
The light tasks are treated as sequential sporadic tasks, and are scheduled on the remaining processors by traditional multiprocessor scheduling algorithms, such as global EDF \cite{baruah07rtss} and partitioned EDF \cite{baruah05rtss}.

\section{A Single DAG on Uniform Multiprocessors}\label{s:single}

In this section, we focus on the problem of bounding the response time of a single DAG task exclusively executing on a \textit{uniform} multiprocessor platform, where processors in general have different speeds.
The reason why we study the case of uniform multiprocessors is as follows.
In the semi-federated scheduling, a heavy task may share processors with others. From this heavy task's point of view, it only owns a portion of the processing capacity of the shared processors. Therefore, to analyze semi-federated scheduling, we first need to solve the fundamental problem of how to bound the response time in the presence of portioned processing capacity.
The results of this section 
serve as the theoretical foundation for semi-federated scheduling on \textit{identical} multiprocessors in later sections (while they
also can  be directly used for federated scheduling on uniform multiprocessors as a byproduct of this paper).

\subsection{Uniform Multiprocessor Platform}
We assume a uniform multiprocessor platform of 
$m$ processors, characterized by their normalized speeds
$\{\speed_1, \speed_2, \cdots, \speed_{m}\}$.
Without loss of generality, we assume the processors are sorted in non-increasing speed order ($\speed_x \geq \speed_{x+1}$) and the fastest processor has a unit speed i.e., $\speed_1 = 1$.
In a time interval of length $t$, the amount of workload executed on a processor with speed $\speed_x$ is $t \speed_x$.
Therefore, if the WCET of some workload on a unit speed processor is $c$, then its WCET becomes $c/\speed_{x}$ on a processor  of speed $\speed_{x}$.




\subsection{Work-Conserving Scheduling on Uniform Multiprocessors}\label{ss:work-conserving}
On identical multiprocessors, a work-conserving scheduling algorithm never leaves any processor idle while there exists some eligible vertex. 
The response time bound for a DAG task in (\ref{e:fedbound}) is applicable to \textit{any} work-conserving scheduling algorithm, regardless what particular strategy is used to assign the eligible vertices to available processors.

However, on uniform multiprocessors, the strategy to assign eligible vertices to processors may greatly affect the timing behavior of the task. Therefore, we extend the concept of work-conserving scheduling to uniform multiprocessors by enforcing execution on faster processors as much as possible \cite{funk2001line}. 
More precisely, a scheduling
algorithm is work-conserving on $m$ uniform processors if it satisfies both of the following conditions:
\begin{enumerate}
\item No processor is idled when there are eligible vertices awaiting execution.

\item If at some time point there are fewer than $m$ eligible vertices awaiting execution,
then the eligible vertices are executed upon the fastest processors.
\end{enumerate}

Figure \ref{fig:uniformworkconserving} shows a possible scheduling sequence of the DAG task on three processors with speeds $\{1, 0.5, 0.25\}$. Vertex $v_2$ migrates  to the fastest processor at time $5$ and $v_5$ migrates to the fastest processor at time $9$.
These two extra migrations
are the price paid for satisfying the second condition of work-conserving scheduling in above.

If we disallow the migration from slower processors to faster processors, there may be significant resource waste. In the worst case, a DAG task will execute its longest path on the lowest processor, which results in very large response time. 
In Appendix-A we discuss the response time bound and resource waste when the inter-processor migration is forbidden.

\begin{figure}[!t]
	\centering
\includegraphics[width=3.2in]{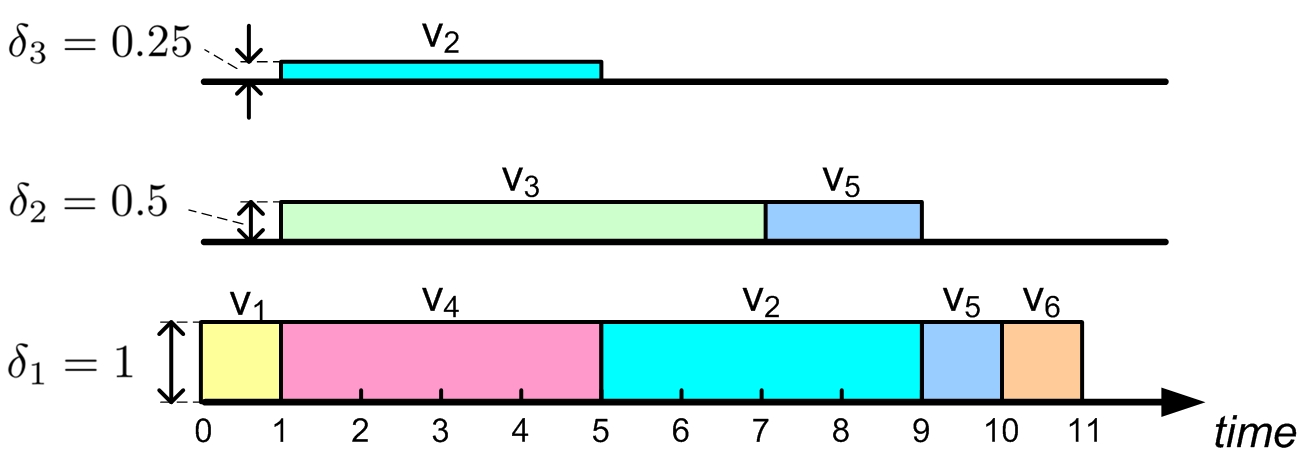}
	\caption{A work-conserving scheduling sequence on uniform multiprocessors.} 
	\label{fig:uniformworkconserving}
\end{figure}

\subsection{Response Time Bound}
In the following we derive response time bounds for a single DAG task executing on a uniform multiprocessor platform under work-conserving scheduling.
Although the task is recurrent, we only need to analyze its behavior in one release since the task has a constraint deadline.
We first introduce the concept \textit{uniformity} \cite{funk2001line}: 
\begin{define}[Uniformity]
The \textit{uniformity} of 
$m$ processors
with speeds $\{\speed_1, \cdots, \speed_{m} \}$ ($\speed_x \geq \speed_{x+1}$) is defined as 
\begin{equation}
\platfeature \stackrel{\Delta}{=}
\max_{x=1}^{m}\left\lbrace \frac{\sumspeed_m-\sumspeed_x}{\speed_x}\right\rbrace 
  \label{eq:uniformality}
\end{equation}
where $\sumspeed_x$ is the sum of the speeds of the $x$ fastest processors:
\begin{equation}
  \sumspeed_x\stackrel{\Delta}{=}\sum_{j=1}^{x}\speed_j
  \label{eq:defineS}
  \end{equation}
  \label{define:uniformality}
\end{define}
\vspace{-0.1in}
Now we derive the response time upper bound:
\begin{theorem}
The response time of a DAG task $\tau_i$  executing on $m$ processors
with speeds $\{\speed_1, \cdots, \speed_{m}\}$ 
is bounded by:	
	\begin{equation}
	\response\leq\frac{C_i +\platfeature  L_i}{\sumspeed_m}
	\label{eq:singlebound}
	\end{equation}	 
where $\lambda$ and $S_m$ are defined in Definition   \ref{eq:uniformality}.
	\label{th:singlebound}
\end{theorem}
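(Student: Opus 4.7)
The claim rearranges to $R\cdot S_m - C_i \le \lambda L_i$: the idle processing capacity accumulated over $[0,R]$ is at most $\lambda$ times the critical-path length. My plan is to lift Graham's classical chain argument from identical to uniform multiprocessors, using the second work-conservation condition (prefer the fastest processors) to control the speed at which chain progress is made.

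First, I would fix a critical chain by backward induction from the last-finishing vertex $v_k$: take $v_{j-1}$ to be the predecessor of $v_j$ with the latest finish time. This construction guarantees that at every instant of $[0,R)$ the ``current'' chain vertex is already eligible, because all its non-chain predecessors finish no later than its chain predecessor. Eligibility of the current chain vertex at every time is essential for the work-conservation step below.

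Next, I would partition $[0,R]$ by the number $x$ of busy processors and write $t_x$ for the total measure of time with exactly $x$ busy. Combining work-conservation conditions~(1) and~(2), whenever $x$ processors are busy they must be the $x$ fastest ones: condition~(1) forces every eligible vertex to run, so the number of busy processors equals the number of eligible vertices, after which condition~(2) places them on the top $x$. The total executed work is therefore $C_i=\sum_{x=1}^{m}S_x t_x$, and rearranging gives
\begin{equation*}
R\cdot S_m - C_i \;=\; \sum_{x=1}^{m-1}(S_m-S_x)\,t_x .
\end{equation*}
Simultaneously, for each $x<m$ the current chain vertex must be among the executing vertices (otherwise an idle processor coexists with an eligible vertex, violating condition~(1)) and, being one of at most $x$ eligible vertices, runs on a processor of speed no less than $\delta_x$ by condition~(2). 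Integrating along $t_x$ yields $L_i\ge\sum_{x=1}^{m-1}\delta_x t_x$.

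Finally, the definition (\ref{eq:uniformality}) provides $S_m-S_x\le\lambda\,\delta_x$ for every $x$, so
\begin{equation*}
R\cdot S_m - C_i \;\le\; \lambda\sum_{x=1}^{m-1}\delta_x t_x \;\le\; \lambda L_i ,
\end{equation*}
which is the target inequality. The main obstacle is the chain construction: a generic chain from the head to $v_k$ need not have an eligible current vertex at every instant, because off-chain predecessors may still be executing, and this would break the chain-progress bound. The latest-finishing-predecessor trick is precisely what synchronizes chain progress with work-conservation. A secondary subtlety is justifying that the $x$ busy processors coincide with the $x$ fastest ones, which needs both work-conservation conditions in tandem rather than either one in isolation.
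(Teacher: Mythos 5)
Your proposal is correct and follows essentially the same route as the paper's proof: your $t_x$ (time with exactly $x$ busy processors) is the paper's $\beta_x=\alpha_x-\alpha_{x+1}$, your chain built from latest-finishing predecessors is the paper's path $\pi$, and the chain-progress bound $L_i\geq\sum_{x=1}^{m-1}\delta_x t_x$ combined with $S_m-S_x\leq\lambda\delta_x$ is exactly the paper's chain of inequalities (\ref{e:guan-2})--(\ref{e:guan-1}). No substantive differences or gaps to report.
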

\begin{figure}[!t]
	\centering
	\includegraphics[width=3.2in]{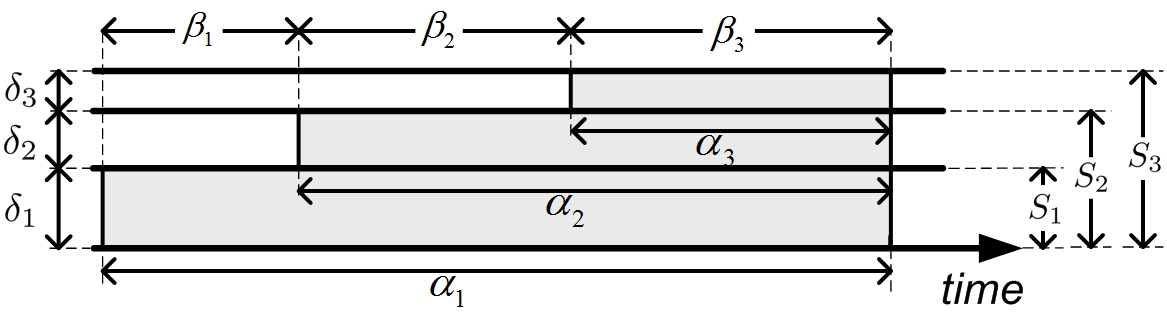}
	\caption{Illustration of $\alpha_x$ and $\beta_x$.}
	\label{fig:example3}
\end{figure}
\begin{proof}
For simplify of presentation, we assume that each vertex $v$ executes exactly for its WCET $c(v)$\footnote{It is easy to show that the response time bound in (\ref{eq:singlebound}) still holds if some vertices execute for shorter than its WCET.}.
Without loss of generality, we assume the task
under analysis releases an instance at time $0$, and
thus finishes the current
instance at time 
$\response$. During the time window $[0,\response]$, let $\alpha_x$ denote the total length of intervals during which the $x^{th}$ processor (with speed $\speed_x$) is busy. By the work-conserving scheduling rules in Section \ref{ss:work-conserving}, we know if the $x^{th}$ processor is busy in a time interval then all faster processors 
(with index smaller than $x$) must also be busy. Therefore, we know $R = \alpha_1$. We define 

\begin{equation*}
\beta_x = \left \{
\begin{array}{cl}
\alpha_x-\alpha_{x+1},                        & ~~ 1\leq{x}< m  \\
\alpha_x, & ~~x= m
\end{array}\right.
\end{equation*}
Figure \ref{fig:example3} illustrates the definition of $\alpha_x$ and $\beta_x$.
So we can rewrite $R=\alpha_1$ as
\begin{equation}
\response=\sum_{x=1}^{m}\beta_x 
\label{eq:R=}
\end{equation}	
The total workload executed on all the processors in $[0, \response]$ is
$(\beta_1 S_1 + \cdots + \beta_{m} S_{m})$, which equals the total worst-case execution time of the task:
\begin{equation}
C_i = \sum_{x=1}^{m}\beta_x  \sumspeed_x
\label{eq:C=}
\end{equation}


Let $v_z$ be the latest finished vertex in the DAG and $v_{z-1}$ be the latest finished vertex among all predecessors of $v_z$, repeat this way until no predecessor can be found, we can simply construct a path $\pi=\{v_1,v_2,...,v_{z-1},v_z\}$. The fundamental observation is that all processors must be busy between the finishing time of $v_k$ and the starting time of $v_{k+1}$ where $1\leq{k}\leq{z-1}$.  We use
$\consume(\pi, \speed_x)$ to denote the total amount of workload executed for vertices along path $\pi$ in all the time intervals during which both of the following conditions are satisfied:
\begin{itemize}
\item at least one processor is idle
\item the slowest busy processor has speed $\speed_x$.
\end{itemize}

The total length of such time intervals is $\beta_x$.
Since at least one processor is idle, $\pi$ must contain a vertex being executed in this time interval (since at any time point before $R$, there is at least one eligible vertex along $\pi$). 
So we have
\begin{align}
  \consume(\pi, \speed_x) &\geq \beta_x \speed_x \notag\\
\Rightarrow  ~~ \sum_{x=1}^{m-1}
\consume(\pi, \speed_x) & \geq \sum_{x=1}^{m-1} \beta_x  \speed_x
\label{e:guan-2}
\end{align}
Let $l_\pi$ denote the total workloadalong path $\pi$, 
so we know
\[
\sum_{x=1}^{m-1}
\consume(\pi, \speed_x) \leq l_\pi
\]
Since $L_i$ is the total workload of the longest path in the DAG, we know $l_\pi \leq L_i$. In summary, we have
\begin{equation}
\sum_{x=1}^{m-1}
\consume(\pi, \speed_x)  \leq L_i
\label{e:guan-3}
\end{equation}

Combining (\ref{e:guan-2}) and (\ref{e:guan-3}) gives
\begin{equation}
 L_i \geq \sum_{x=1}^{m-1} \beta_x \speed_x 
~\Rightarrow~
\platfeature L_i   \geq \sum_{x=1}^{m-1} \beta_x  \speed_x  \platfeature
\label{e:guan-1}
\end{equation}

By the Definition of $\platfeature$ in (\ref{eq:uniformality}) we know
\begin{equation*}
\forall x: \frac{\sumspeed_{m}-\sumspeed_x}{\speed_x}\leq\platfeature
\end{equation*}	
Therefore, (\ref{e:guan-1}) can be rewritten as
\begin{equation*}
 \platfeature L_i  \geq \sum_{x=1}^{m-1}\beta_x
(\sumspeed_{m}-\sumspeed_x) 
\end{equation*}
and by applying (\ref{eq:C=}) we get 
\begin{align*}
& C_i + \platfeature L_i   \geq \sum_{x=1}^{m-1}\beta_x
(\sumspeed_{m}-\sumspeed_x) + \sum_{x=1}^{m}\beta_x  \sumspeed_x \\
\Leftrightarrow ~~&
C_i + \platfeature L_i   \geq \beta_{m} \sumspeed_{m} + \sum_{x=1}^{m-1}\beta_x\sumspeed_{m}  \\
\Leftrightarrow ~~&
C_i +  \platfeature L_i  \geq 
\sumspeed_{m} \sum_{x=1}^{m}\beta_x
\end{align*}
and by applying (\ref{eq:R=}), the theorem is proved.
\end{proof}
When $\speed_1 =  \cdots = \speed_m = 1$, we have $\platfeature = m-1$ and $S_{m} = m$, so the bound in Theorem 
\ref{th:singlebound}
perfectly degrades to (\ref{e:fedbound})
for the case of identical processors.

\section{Runtime Dispatcher of Each DAG}\label{s:framework}

The conceptual uniform multiprocessor platform in last section imitates the resource obtained by a task when sharing processors with other tasks.
A naive way to realize the conceptual uniform multiprocessors on our identical unit-speed multiprocessor platform is to use fairness-based scheduling, in which
task switching is sufficiently frequent so that each task receives a fixed portion of processing capacity.
However, this approach incurs extremely high context switch overheads which may not be acceptable in practice.

In the following, we introduce our method to realize the proportional sharing of processing capacity without frequent context switches. The key idea is to use a runtime \textit{dispatcher} for each DAG task to encapsulate the execution 
on a conceptual processor with speed $\delta_p$ into a \textit{container task} $\varphi_p$ with a \textit{load bound} $\delta_p$.
The dispatcher guarantees that the  workload encapsulated into a container task does not exceed its load bound.
These container tasks are scheduled using priority-based scheduling algorithms and their load bounds can be used to decide the schedulability.

As will be introduced in the next section, in our semi-federated scheduling algorithms,
most of the container tasks used by a DAG task have a load bound $1$, 
which correspond to the dedicated processors, and only 
a few of them have fractional load bounds ($<1$). 
However, for simplicity of presentation, in this section we treat all container tasks identically, regardless whether 
the load bound is $1$ or not.

Suppose we execute a DAG task through $m$ container tasks $\{\varphi_1, \varphi_2, \cdots, \varphi_m\}$.
Each of the container task is affiliated with the following information $\varphi_p = (\delta_p, d_p, exe_p)$:
\begin{itemize}
	\item $\delta_p$: the load bound of $\varphi_p$, which is a fixed value.
	\item $d_p$: the absolute deadline of $\varphi_p$, which varies at runtime.
	\item $exe_p$: the vertex currently executed by $\varphi_p$, which also varies at runtime
\end{itemize}


%
%
%


At each time instant, a container task is either \textit{occupied} by some vertex or \textit{empty}. If a container task is occupied by vertex $v$, i.e., $exe_p = v$, then 
this container task is responsible to execute the workload of $v$
and the maximal workload executed by this container task executes
before the absolute deadline $d_p$ is $c(v)$. A vertex $v$ may be divided into several parts, and the their total WCET equals $c(v)$, as will be discussed later when we introduce Algorithm \ref{a:dispatcher}.
\textbf{Note that an occupied container task becomes empty when time reaches its absolute deadline}. 

\begin{algorithm}
	\caption{The dispatching algorithm (invoked at time $t$).}
	\begin{algorithmic}[1]	
		\STATE $v =$ an arbitrary eligible vertex in $S$ ($S$ stores the set of vertices that have not been executed yet); 
		\STATE Remove $v$ from $S$; 
		\STATE $\varphi_p =$ the empty container task with the largest load bound; 	
		\STATE $d' = $ the earliest deadline of all occupied container tasks with load bound strictly larger than $\delta_p$;
		\IF{\!$(\textrm{all container tasks are empty}) 
			\vee (d'  >  t + \frac{c(v)}{\delta_p})$} \label{line:if}
		\STATE $d_p =  t + c(v) / \delta_p$ \label{line:5}
		\STATE $exe_p = v$  \label{line:6}
		\ELSE
		\STATE $d_p =  d'$  \label{line:8}
		\STATE Split $v$ into $v'$ and $v''$ so that \label{line:split}
		\[c(v') = (d_p - t) \times \delta_p ~\textrm{and}~ c(v'') = c(v) - c(v')\] 
		\STATE $exe_p = v'$
		\STATE Put $v''$ back to the head of $S$; 
		\STATE Add a precedence constraint from $v'$ to $v''$; \label{line:12}
		\ENDIF
	\end{algorithmic}
	\label{a:dispatcher}
\end{algorithm}

The pseudo-code of the dispatcher is shown in Algorithm \ref{a:dispatcher}.
At runtime, the dispatcher is invoked when there exist both empty container tasks and eligible vertices.
The target of the dispatcher is to assign (a part of) an eligible vertex to the \textit{fastest} empty container task.

The absolute deadline $d_p$ of a container task $\varphi_p$ mimics the finishing time of a vertex if it is executed on a processor with the speed $\delta_p$.
When the container task starts to be occupied by a vertex $v$ at time $t$,
$d_p$ is set  to be $d_p = t + c(v)/\delta_p$. 
Therefore, we have the following property of Algorithm \ref{a:dispatcher}. First, 
the dispatcher guarantees the execution rate of a container task is consistent with the corresponding uniform processors:
\begin{property}
If $\varphi_p$ starts to be occupied by $v$ from  $t_1$ and becomes empty at $t_2$, the maximal workload
executed by $\varphi_p$ in $[t_1, t_2)$ is $(t_2 - t_1) \delta_p$.
\end{property}

Another key point of Algorithm \ref{a:dispatcher} is 
always 
keeping the container task with larger load bounds being occupied, 
which mimics the second work-conserving scheduling rule on uniform multiprocessors (workload is always executed on faster processors). This is done by checking the condition in line \ref{line:if}:
\begin{equation}\label{e:splitcond}
d'  >  t + c(v)/\delta_p
\end{equation}
where $d'$ is the earliest absolute deadline among all the container tasks currently being occupied and $\delta_p$ is load bound of the fastest empty container task which will be used now. If this condition does not hold, 
putting the entire $v$ into $\varphi_p$ may lead to the situation that 
a container task with a larger load bound becomes empty while $\varphi_p$ 
is still occupied. This corresponds to the situation on uniform processors that a faster processor is idle while a slower processor is busy, which violates the second work-conserving scheduling rule. To solve this problem, 
in Algorithm \ref{a:dispatcher}, when condition (\ref{e:splitcond}) does not hold,
$v$ is split into two parts $v'$ and $v''$, so that $\varphi_p$ only executes
the first part $v'$, whose deadline exactly equals to 
the earliest absolute deadline of all faster container tasks (line \ref{line:split}).
The remaining part $v''$ is put back to $S$ and will be assigned in the future, and a precedence from $v'$ to $v''$ is established to guarantee that $v''$ become eligible only if $v'$ has finished. 
In summary, Algorithm \ref{a:dispatcher} guarantees the following property: 
\begin{property}
The eligible vertices are always executed upon the container tasks with the largest load bounds.
\end{property}

\begin{figure}[!t]
	\centering
\includegraphics[width=3.2in]{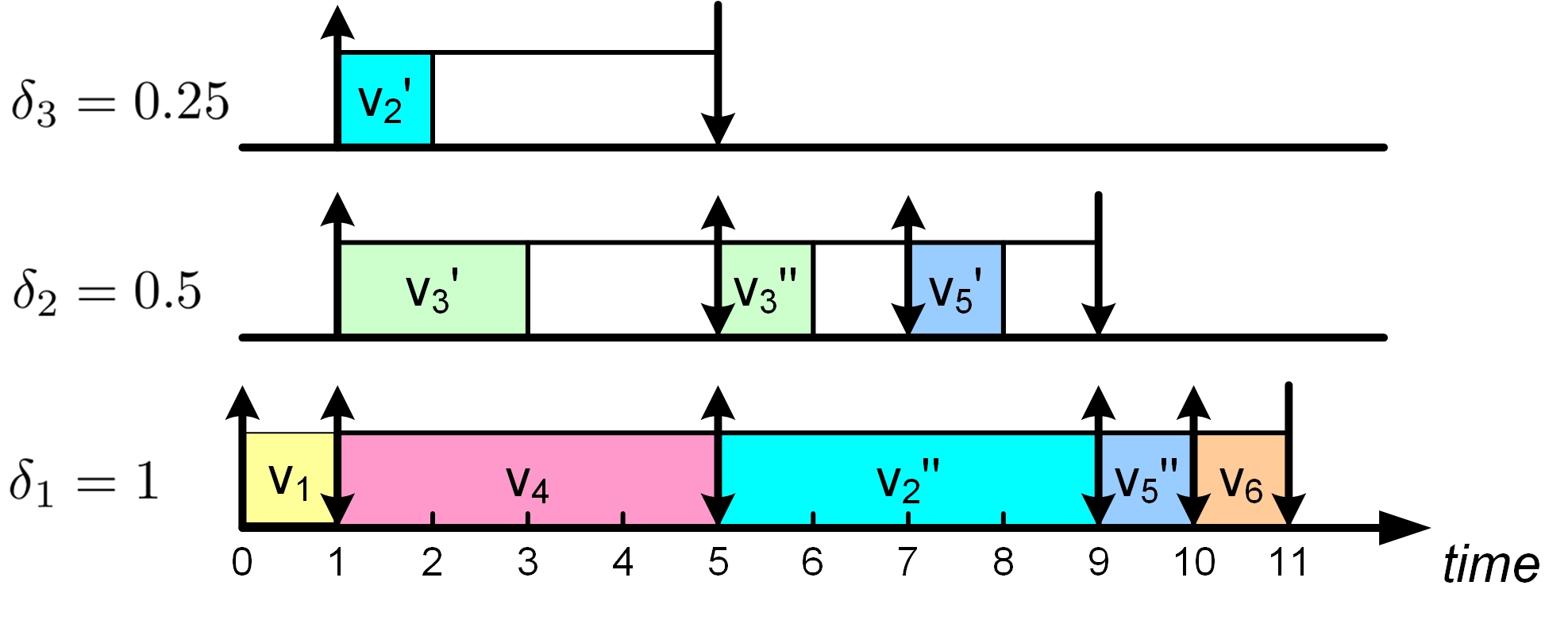}
	\caption{A scheduling sequence on container tasks.} 
	\label{fig:deadlineassignment}
\end{figure}

Figure \ref{fig:deadlineassignment} shows a possible scheduling sequence of the example DAG task in Figure \ref{fig1} 
executed on three container tasks with load bounds $\delta_1 = 1$, $\delta_2 = 0.5$ and $\delta_3 = 0.25$.
An upwards arrow represents an empty container task becoming occupied and a downwards arrow represents an occupied task becoming empty.
Algorithm \ref{a:dispatcher} is invoked whenever there exist both eligible 
vertices and empty container tasks. 
This scheduling sequence corresponds to the scheduling sequence 
of the same task on uniform processors with speeds $\delta_1 = 1$, $\delta_2 = 0.5$ and $\delta_3 = 0.25$ in Figure 
\ref{fig:uniformworkconserving}. We can see that the amount of workload executed between any two time points at which Algorithm \ref{a:dispatcher} is invoked, is the same in both scheduling sequences. An step-by-step explanation of 
this example is given in Appendix-B.

In general, 
if each container task always finishes the workload of its assigned vertex before the corresponding deadline, 
the scheduling sequence resulted by Algorithm \ref{a:dispatcher} 
on container tasks with load bounds 
$\{\delta_1, \cdots, \delta_{m}\}$
corresponds to 
a work-conserving scheduling sequence of the same DAG task on uniform multiprocessors
with speeds $\{\delta_1, \cdots, \delta_{m}\}$. Therefore the response time bound in Theorem \ref{th:singlebound}
can be applied to bound the response time of the DAG task executed on container tasks using Algorithm \ref{a:dispatcher}. By the above discussions, we can conclude
the following theorem.

\begin{theorem}\label{t:framework}
	Suppose a DAG task $\tau_i$ executes on
	$m$ container tasks with load bounds $\{\speed_1, \cdots, \speed_{m}\}$ and each container task always
	finishes its assigned workload before the corresponding 
	absolute deadline, then the response time $R$ of $\tau_i$
	is upper bounded by:	
	\begin{equation}
	\response\leq\frac{C_i +\platfeature  L_i}{\sumspeed_{m}}
	\label{eq:framework}
	\end{equation}	 
\end{theorem}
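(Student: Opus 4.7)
The plan is to reduce the claim to Theorem~\ref{th:singlebound} by exhibiting a correspondence between the container-task schedule produced by Algorithm~\ref{a:dispatcher} and a work-conserving schedule of the same DAG on a uniform multiprocessor platform with speeds $\{\delta_1,\ldots,\delta_m\}$. Concretely, I would map container task $\varphi_p$ to processor $p$ of speed $\delta_p$, and then argue that the timing of the simulated processor schedule is identical to the container-task schedule so that the response time coincides.

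First I would set up the simulation precisely. Under the assumption that every container task finishes its assigned workload before its absolute deadline, Property~1 says that during any interval $[t_1,t_2)$ in which $\varphi_p$ is continuously occupied by a (sub)vertex, exactly $(t_2-t_1)\delta_p$ units of work are processed — which is the same amount that processor $p$ of speed $\delta_p$ would process while continuously executing that vertex on a uniform platform. Hence the simulated uniform-multiprocessor schedule and the container schedule agree on the set of busy/idle intervals per unit and on when vertices finish, so their overall response times agree.

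Next I would verify the two work-conserving conditions of Section~\ref{ss:work-conserving} for the simulated schedule. The first condition (no processor idled while eligible work exists) follows because Algorithm~\ref{a:dispatcher} is invoked precisely when there are both eligible vertices and empty container tasks, and it always assigns one. The second condition (eligible vertices executed on the fastest available processors) is exactly Property~2, which in turn is enforced by the $d'>t+c(v)/\delta_p$ test in line~\ref{line:if} together with the splitting step in lines~\ref{line:split}--\ref{line:12}.

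The subtle point — and the step I expect to be the main obstacle — is that Algorithm~\ref{a:dispatcher} rewrites the DAG online by splitting a vertex $v$ into $v',v''$ with an added precedence $v'\before v''$, while Theorem~\ref{th:singlebound} is stated for a fixed DAG with parameters $C_i$ and $L_i$. I would handle this by observing that any finite execution involves only finitely many splits, so we may consider the resulting refined DAG $\tau_i'$ as the actual object being scheduled on the uniform platform. The refinement preserves $C_i$ since $c(v')+c(v'')=c(v)$ by construction (line~\ref{line:split}), and it preserves $L_i$ because the chain $v'\before v''$ that replaces $v$ in any path contributes total weight $c(v)$, so the longest path length of $\tau_i'$ equals $L_i$. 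Consequently, applying Theorem~\ref{th:singlebound} to $\tau_i'$ on the uniform platform with speeds $\{\delta_1,\ldots,\delta_m\}$ — whose uniformity and total speed are exactly the $\lambda$ and $S_m$ appearing in the statement — yields $R\leq (C_i+\lambda L_i)/S_m$, and by the simulation argument this is also the response time on the container tasks.
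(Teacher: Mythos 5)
Your proposal is correct and follows essentially the same route as the paper, which justifies Theorem~\ref{t:framework} by exactly this simulation argument: the schedule produced by Algorithm~\ref{a:dispatcher} corresponds to a work-conserving schedule of the DAG on uniform processors with speeds equal to the load bounds, so Theorem~\ref{th:singlebound} applies. You in fact supply details the paper leaves implicit (verification of the two work-conserving conditions via Properties~1 and~2, and the observation that vertex splitting preserves $C_i$ and does not increase $L_i$), which strengthens rather than changes the argument.
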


\section{Semi-Federated Scheduling Algorithms} \label{s:semi_federated}

In this section, we propose two semi-federated scheduling algorithms based on container task and runtime dispatcher introduced in last section. In the first algorithm, a DAG task requiring $x + \epsilon$ processing capacity is granted $x$ dedicated processors and \textit{one} container task with load bound $\epsilon$, and all the container tasks and the light tasks are scheduled by \textit{partitioned} EDF on the remaining processors. The second algorithm enhances the first one by allowing to divide the fractional part $\epsilon$ into \textit{two} container tasks, which further improves resource utilization.

\subsection{The First Algorithm: \firstSFS}

By Theorem \ref{t:framework} we know 
a DAG task is schedulable if
the load bounds $\{\speed_1, \cdots, \speed_{m}\}$  of the container tasks satisfy
\begin{equation}\label{e:check-1}
\frac{C_i + \platfeature L_i}{S_m} \leq D_i
\end{equation}
where $\platfeature$ is the uniformity and $S_m$ is the sum of $\{\speed_1, \cdots, \speed_{m}\}$, defined in Definition \ref{define:uniformality}. 
There are difference choices of the container tasks to make a DAG task schedulable.
In general, we want to make the DAG task to be schedulable with as little processing capacity as possible.
The load bound of a container task actually represents its required processing capacity, and thus $S_{m}$ represents the total processing capacity required by all the container tasks 
of a DAG task. 
In the following, we will introduce how to choose the feasible container task set with the minimal $S_{m}$.

We first show that the total load bound of any container task set that can pass the condition (\ref{e:check-1}) has a lower bound:

\begin{define}
	The \emph{minimal capacity requirement} $\gamma_i$ of a DAG task $\tau_i$
	is defined as:
	\begin{equation}\label{e:gamma}
	\gamma_i  = \frac{C_i - L_i}{D_i - L_i}
	\end{equation} 
\end{define}

\begin{lemma}\label{l:lowerbound}
	A DAG task $\tau_i$ is scheduled on $m$ container tasks with load bounds $\{\delta_1, \delta_2, \cdots, \delta_m \}$. 
	If condition (\ref{e:check-1}) is satisfied,
	then it must hold
	\[
	S_{m} \geq \gamma_i 
	\]
\end{lemma}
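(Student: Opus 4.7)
The plan is to derive $S_m \geq \gamma_i$ from the schedulability inequality (\ref{e:check-1}) by a short algebraic manipulation, after first producing a convenient lower bound on the uniformity $\lambda$. The key observation is that $\lambda$, being a maximum over all indices in Definition \ref{define:uniformality}, can be evaluated at any particular $x$ to yield a valid lower bound. Taking $x = 1$ gives
\[
\lambda \;\geq\; \frac{S_m - S_1}{\delta_1} \;=\; \frac{S_m - \delta_1}{\delta_1} \;=\; \frac{S_m}{\delta_1} - 1.
\]

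Next, I will use the fact that each container task is mapped onto a shared unit-speed processor, so every load bound satisfies $\delta_p \leq 1$; in particular $\delta_1 \leq 1$. Combining this with the previous display yields the clean lower bound $\lambda \geq S_m - 1$. Plugging this into the hypothesis (\ref{e:check-1}) gives
\[
C_i + (S_m - 1) L_i \;\leq\; C_i + \lambda L_i \;\leq\; S_m D_i,
\]
which after rearrangement is exactly $C_i - L_i \leq S_m (D_i - L_i)$.

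Finally, since the critical path of length $L_i$ must fit within the deadline in any feasible execution, (\ref{e:check-1}) can hold only when $D_i > L_i$ (the Graham-style bound already requires $\lambda L_i / S_m \leq D_i$, and in the degenerate case $D_i = L_i$ the claim is vacuous or needs a separate sentence). Dividing by $D_i - L_i > 0$ then gives $S_m \geq (C_i - L_i)/(D_i - L_i) = \gamma_i$, as desired.

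The only subtle point, and what I would treat as the main obstacle, is recognizing the right specialization of the uniformity definition (choosing $x = 1$ is exactly what couples $\lambda$ to the \emph{total} load $S_m$) and justifying $\delta_1 \leq 1$ from the container-task construction of Section \ref{s:framework}. Once those two ingredients are in place, the lemma reduces to one line of algebra on the hypothesis (\ref{e:check-1}).
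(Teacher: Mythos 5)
Your proof is correct and follows essentially the same route as the paper: instantiate the uniformity at the fastest container task to get $\lambda \geq (S_m - \delta_1)/\delta_1$, use $\delta_1 \leq 1$ to obtain $\lambda \geq S_m - 1$, and substitute into (\ref{e:check-1}) to conclude $S_m \geq \gamma_i$. Your extra remark about needing $D_i > L_i$ before dividing is a minor point of care the paper leaves implicit, not a different argument.
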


\begin{proof}
	Without loss of generality, we assume the container tasks
	are sorted in non-increasing order of their load bounds, i.e.,
	$\delta_p \geq \delta_{p+1}$. 	
	By the definition of $\platfeature$ we have
\[
\platfeature \geq \frac{S_{m} - \delta_1}{\delta_1} 
\]	
and since the load bounds are at most $1$, i.e., $\delta_1 \leq 1$, we know
\[
\platfeature \geq S_{m} - 1
\]	
Applying this to (\ref{e:check-1}) yields 
\begin{align*}
 \frac{C_i + (S_m - 1)L_i}{S_m} \leq D_i \Rightarrow  S_m \geq \frac{C_i - L_i}{D_i - L_i}
\end{align*}
so the lemma is proved.
\end{proof}

Next we show that the minimal capacity requirement is achieved by 
using only one container task with a fractional load bound ($<1$) and $x$ container tasks with load bound $1$: 

\begin{lemma}\label{l:mplusone}
A DAG task $\tau_i$ is schedulable on $x$ container tasks with load bound of $1$ and one container task with load bound $\delta$, where
$x =  \lfloor \gamma_i  \rfloor$ and $\delta = \gamma_i  -  \lfloor \gamma_i  \rfloor$.
\end{lemma}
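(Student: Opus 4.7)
The plan is to verify schedulability by plugging the specified container-task configuration into the response-time bound of Theorem \ref{t:framework} and checking that the resulting bound is at most $D_i$.

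First I would list the load bounds in non-increasing order: $\delta_1 = \delta_2 = \cdots = \delta_x = 1$ and $\delta_{x+1} = \delta = \gamma_i - \lfloor \gamma_i \rfloor$. Summing these gives $S_{x+1} = x + \delta = \gamma_i$. Next I would compute the uniformity from Definition \ref{define:uniformality}. For $k \in \{1,\dots,x\}$ we have $S_k = k$ and $\delta_k = 1$, so $(S_{x+1}-S_k)/\delta_k = \gamma_i - k$, which is maximized at $k=1$ with value $\gamma_i - 1$; and for $k = x+1$ the quantity vanishes. Hence $\platfeature = \gamma_i - 1$ (assuming $x \geq 1$; the degenerate $x = 0$ case only arises for light tasks and can be mentioned briefly).

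With $S_{x+1} = \gamma_i$ and $\platfeature = \gamma_i - 1$, Theorem \ref{t:framework} gives
\begin{equation*}
R \leq \frac{C_i + (\gamma_i - 1) L_i}{\gamma_i}.
\end{equation*}
It then suffices to show the right-hand side is at most $D_i$. A short algebraic manipulation reduces this inequality to $C_i - L_i \leq \gamma_i (D_i - L_i)$, which holds with equality by the definition $\gamma_i = (C_i - L_i)/(D_i - L_i)$ in (\ref{e:gamma}).

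I do not expect a real obstacle here: the only subtle point is computing the uniformity correctly, since one must check that the maximum in (\ref{eq:uniformality}) is attained at the fastest processor (index $k=1$) rather than somewhere among the unit-speed block or at the fractional container. Once the list of load bounds is written out in sorted order this is immediate, because $(S_{x+1}-S_k)/\delta_k$ is decreasing in $k$ throughout the unit-speed block and drops to $0$ at the last (fractional) container. After that the schedulability check collapses to the defining identity of $\gamma_i$.
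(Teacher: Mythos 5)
Your proposal is correct and follows essentially the same route as the paper's proof: compute $S_m = \gamma_i$ and $\platfeature = \gamma_i - 1$ for this container configuration, apply Theorem \ref{t:framework}, and reduce the resulting bound $\frac{C_i + (\gamma_i - 1)L_i}{\gamma_i} \leq D_i$ to the defining identity of $\gamma_i$. Your explicit check that the maximum in the uniformity is attained at the fastest container (and your remark on the degenerate $x=0$ case) only adds detail the paper leaves implicit.
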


\begin{proof}
By the definition of $\platfeature$, we get
\[
\platfeature {=}
 \max\left(
 \frac{\gamma_i - 1}{1}, \cdots, \frac{\gamma_i - \lfloor \gamma_i \rfloor}{1} \right) = \gamma_i -1 \\
\]
and we know $S_m = x + \delta = \gamma_i$. 
So by (\ref{eq:framework}) the response time of $\tau_i$ is bounded by
\[
R \leq  \frac{C_i + (\gamma_i -1) L_i}{\gamma_i}
\]
In order to prove $\tau_i$ is schedulable, it is sufficient to prove
\[
 \frac{C_i + (\gamma_i -1) L_i}{\gamma_i} \leq D_i
\]
which must be true by the definition of $\gamma_i$.
\end{proof}

In summary, by Lemma \ref{l:lowerbound} and \ref{l:mplusone}
we know using $x$ container tasks with load bound $1$ and \textit{one} container task with a fractional load bound requires the minimal
processing capacity, which motivates our first scheduling algorithm $\firstSFS$.

 \begin{algorithm}
 	\caption{The first semi-federated algorithm: \firstSFS.}
 	\begin{algorithmic}[1]
 		\FOR{each heavy task $\tau_i$}
 		\STATE $\gamma_i = \frac{C_i - L_i}{D_i - L_i}$ \label{line:2}
 		\IF{less than $\lfloor \gamma_i \rfloor$ avaiable processors}
 		\STATE \textbf{return} \textit{failure}
 		\ENDIF
 		\STATE assign  $\lfloor \gamma_i \rfloor$ dedicated processors to $\tau_i$
 		\STATE create a container task with load bound $ \gamma_i - \lfloor \gamma_i \rfloor$ for $\tau_i$\label{line:7}
 		\ENDFOR
 		\STATE $\Omega =$ the set of remaining processors
 		\STATE $S =$ the set of container tasks $\cup$ the set of light tasks
 		\STATE \textbf{if} \Sched($S$, $\Omega$) \textbf{then} \textbf{return} \textit{success} \textbf{else return} \textit{failure}\label{line:11}
 	\end{algorithmic}
 	\label{al:first}
 \end{algorithm}

The pseudo-code of $\firstSFS$ is shown in Algorithm  	\ref{al:first}. 
The rules of \firstSFS\ can be summarized as follows:
\begin{itemize}
	\item Similar to the federated scheduling, \firstSFS\ also classifies the DAG tasks into heavy tasks (density $>1$)
	and light tasks (density $\leq 1$).
	
	\item For each heavy task $\tau_i$, we grant $\lfloor \gamma_i \rfloor$ dedicated processors and one container task
	with load bound $\gamma_i - \lfloor\gamma_i \rfloor$ to it
	where $\gamma_i = \frac{C_i - L_i}{D_i - L_i}$
	(line \ref{line:2} to \ref{line:7}). The algorithm declares a failure if some heavy tasks cannot get enough dedicated processors.

	\item After granting dedicated processors and container tasks to all heavy tasks, 
	the remaining processors will be used to schedule the light tasks and container tasks. 
	The function \Sched($S$, $\Omega$) (in line \ref{line:11}) returns the
 the schedulability testing result
	of the task set consisting of light tasks and container tasks on processors in $\Omega$.
\end{itemize}
Various multiprocessor scheduling algorithms can be used to schedule the light tasks and container tasks, such as global EDF and partitioned EDF. In this work, we choose to use partitioned EDF, and in particular with the
Worst-Fit packing strategy \cite{johnson1974worst}, to schedule them.

More specificly, 
at design time, the light tasks and container tasks 
are partitioned to the processors in $\Omega$.
Tasks are partitioned  
in the non-increasing order of their load (the load of a light task $\tau_i$ equals its density $C_i/D_i$, and the load of a container task $\varphi_p$ equals its load bound $\delta_p$). 
At each step the processor with the minimal 
total load of currently assigned tasks is selected, as long the total 
load of the processor after accommodating this task still does not 
exceed $1$. \Sched($S$, $\Omega$) returns \textit{true} if all tasks are partitioned to some processors, and returns \textit{false} otherwise.

At runtime, the jobs of tasks partitioned to each processor are
scheduled by EDF.
Each light task behaves as a standard sporadic task. Each container task behaves as a
GMF (general multi-frame) task  \cite{baruah99gmf}:
when a container task $\varphi_p$ starts to be occupied by a vertex $v$, $\varphi_p$ releases a job with WCET $c(v)$ and an absolute deadline $d_p$ calculated by Algorithm \ref{a:dispatcher}. Although a container task $\varphi_p$ releases different types of jobs, its load is bounded by $\delta_p$
as the \textit{density} of each of its jobs is $\delta_p$.

Appendix-C presents an example to illustrate
\firstSFS.

Recall that in the runtime dispatching, a vertex may be split into two parts, in order to guarantee a \enquote{faster} container task is never empty when a 
\enquote{slower} one is occupied. The following theorem 
bounds the number of extra vertices created due to the splitting in \firstSFS.

\begin{theorem}\label{t:firstoverhead}
Under \firstSFS,
the number of extra vertices created in each DAG task is bounded by  the number of vertices in the original DAG. 
\end{theorem}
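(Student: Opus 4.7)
The plan is to bound the total number of splits performed by Algorithm \ref{a:dispatcher} on $\tau_i$, since each split creates exactly one extra vertex. Write $f$ for this split count and $n$ for the number of original vertices of $\tau_i$; I aim to prove $f \leq n$ through a charging argument in which every split is paired injectively with a subsequent no-split dispatcher placement onto a dedicated container task, after which a simple counting identity closes the bound.

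My first step is to localise where splits can happen. Under \firstSFS, $\tau_i$ owns $\lfloor \gamma_i \rfloor$ dedicated container tasks of load bound $1$ plus the single fractional container task of load bound $\gamma_i-\lfloor\gamma_i\rfloor<1$. Algorithm \ref{a:dispatcher} always picks the empty container task with the largest load bound as $\varphi_p$, and its split branch demands an occupied container task with load bound \emph{strictly} larger than $\delta_p$. When $\delta_p=1$ this is impossible, so a split can occur only when $\varphi_p$ is the fractional task, which in turn requires that every dedicated task of $\tau_i$ is simultaneously occupied.

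Next I would set up the charging. When a split occurs at time $t$, the dispatcher fixes $d_p=d'$, the earliest deadline among the (all-occupied) dedicated tasks; the piece $v'$ therefore finishes exactly at $d'$, at which moment the fractional task and at least one dedicated task become empty at the same instant. The freshly created $v''$ is eligible then because its only added predecessor $v'$ has just finished, so Algorithm \ref{a:dispatcher} fires and places an eligible vertex on the empty dedicated task; this placement is necessarily a no-split one, since dedicated tasks have the maximum load bound. I would then argue that successive splits yield strictly increasing $d'$ values: between two splits the fractional task must be re-assigned a new vertex whose new deadline strictly exceeds the previous $d'$, so the pairing split $\mapsto$ (dedicated-task placement at time $d'$) is injective.

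Finally, let $A_d$ count all dispatcher placements onto dedicated tasks of $\tau_i$ and $A_f$ all placements onto its fractional task. Every vertex in the (possibly split) DAG is placed exactly once, so $A_d+A_f=n+f$; the injective charging gives $A_d\geq f$; and $f\leq A_f$ since each split is itself a fractional placement. Combining, $f\leq A_f = n+f-A_d \leq n$. The step I expect to take the most care is the injectivity of the charging, together with the degenerate case $c(v'')=0$ where the ``split'' produces a trivial empty tail; I will argue this case either does not arise under the algorithm's semantics or can be excluded from the extra-vertex count without affecting the final bound.
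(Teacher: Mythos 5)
Your proposal is correct and takes essentially the same route as the paper's proof: both rest on the two observations that a split can only occur when dispatching to the single fractional-load container (never on a load-bound-$1$ container) and that each split's deadline is forced to align with a deadline on the dedicated processors, which caps the number of splits at the number of original vertices. Your injective charging of each split to the subsequent dedicated-container placement at time $d'$, combined with the counting identity $A_d + A_f = n + f$, is just a more explicit rendering of the paper's (rather terse) assertion that at most $N$ vertices execute on the dedicated processors and hence at most $N$ alignments can occur, so no new idea is involved and no gap remains beyond the zero-length-tail edge case you already flag, which the paper ignores as well.
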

\begin{proof}
	Let $N$ be the number of vertices in the original DAG. According to Algorithm	\ref{a:dispatcher}, a vertex will not be split if it is
dispatched to a dedicated processor (i.e., a container task with load bound $1$). The number of vertices executed on these dedicated processors is at most $N$.
A vertex my be split when being dispatched to 
the container task with a fractional load bound, and upon each splitting, the deadline of the first part must align with 
some vertices on the dedicated processors, so the number of splitting is bounded by $N$.
\end{proof}


%


\subsection{The Second Algorithm: \secondSFS}

In partitioned EDF, \enquote{larger} tasks in general
lead to worse resource waste.
The system schedulability can be improved if
tasks can be divided into small parts.
In \firstSFS, each heavy task is granted several dedicated processors and \textit{one} container task with fractional load bound. The following examples shows we can actually divide this container task into \textit{two} smaller ones without increasing the total processing capacity requirement.

Consider the DAG task in Figure \ref{fig1},
the minimal capacity requirement of which is
\[
\gamma_i = \frac{C_i - L_i}{D_i - L_i}  = \frac{16 - 8}{14 - 8} = \frac{4}{3}
\]
Accordingly, \firstSFS\ assigns one dedicated processor and one container task with load bound $\frac{1}{3}$ to this task.

Now we replace the container task
with load bound $\frac{1}{3}$ 
 by two container tasks with load bounds $\frac{1}{4}$ and $\frac{1}{12}$.
After that, the total capacity requirement is unchanged since $\frac{1}{3} = \frac{1}{4} + \frac{1}{12}$, and the DAG task is still schedulable since the uniformity of both $\{1, \frac{1}{3}\}$ and $\{1, \frac{1}{4}, \frac{1}{12}\}$
is $\frac{1}{3}$.

However, in general dividing a container task into two may increase the uniformity. 
For example, if we divide the container task in the above example into two container tasks both with load bound $\frac{1}{6}$, the uniformity 
is increased to $1$ and the DAG task is not schedulable. The following lemma gives the condition for dividing one container task into two without increasing the uniformity:
\begin{lemma}\label{l:dividetwo}
A heavy task $\tau_i$ with minimal capacity requirement $\gamma_i$ is scheduled on
$\lfloor \gamma_i \rfloor$ dedicated processors and two container tasks with load bounds $\delta'$ and $\delta''$ s.t.
\[
\delta' + \delta'' = \gamma_i - \lfloor \gamma_i \rfloor
\]
$\tau_i$ is schedulable if
\begin{equation}\label{e:dividetwo-0}
\delta' \geq \max\left(\frac{\gamma_i - \lfloor \gamma_i \rfloor}{2}, \frac{\gamma_i - \lfloor \gamma_i \rfloor}{\gamma_i}\right)
\end{equation}
\end{lemma}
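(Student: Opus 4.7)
The plan is to apply Theorem~\ref{t:framework} directly to the collection of $\lfloor \gamma_i \rfloor + 2$ container tasks and reduce schedulability to an upper bound on the uniformity $\platfeature$, then show that the two conditions in~(\ref{e:dividetwo-0}) are exactly what is needed.

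First, I would fix notation: let $x = \lfloor \gamma_i \rfloor$ and $\epsilon = \gamma_i - x$, so the load bounds are $\{1,\ldots,1,\delta',\delta''\}$ with $x$ ones, and the first hypothesis $\delta' \geq \epsilon/2$ guarantees $\delta' \geq \delta''$, meaning the sequence sorted in non-increasing order is exactly $(1,\ldots,1,\delta',\delta'')$. Hence $\sumspeed_m = x + \delta' + \delta'' = x + \epsilon = \gamma_i$.

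Next, I would compute the uniformity using Definition~\ref{define:uniformality}. Evaluating $(\sumspeed_m - \sumspeed_k)/\speed_k$ at each index gives $\gamma_i - k$ for $k = 1,\ldots,x$ (maximum $\gamma_i - 1$ at $k=1$), then $\delta''/\delta'$ at $k = x+1$, and $0$ at $k = x+2$. Therefore
\[
\platfeature = \max\!\left(\gamma_i - 1,\ \frac{\delta''}{\delta'}\right).
\]
Applying Theorem~\ref{t:framework}, $\tau_i$ is schedulable whenever $(C_i + \platfeature L_i)/\gamma_i \leq D_i$, which rearranges to $\platfeature L_i \leq \gamma_i D_i - C_i$. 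Using the defining identity $\gamma_i(D_i - L_i) = C_i - L_i$, the right-hand side simplifies to $(\gamma_i - 1)L_i$, so the schedulability condition becomes simply
\[
\platfeature \leq \gamma_i - 1.
\]

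Finally, I would show that this inequality reduces to the second condition in~(\ref{e:dividetwo-0}). The term $\gamma_i - 1$ inside the max already meets the bound with equality, so the only real constraint is $\delta''/\delta' \leq \gamma_i - 1$. Substituting $\delta'' = \epsilon - \delta'$ and solving gives $\gamma_i \delta' \geq \epsilon$, i.e.\ $\delta' \geq \epsilon/\gamma_i = (\gamma_i - \lfloor\gamma_i\rfloor)/\gamma_i$, matching the second term in the max. The first term $\delta' \geq \epsilon/2$ is needed to justify the sorting used when computing $\platfeature$.

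There is no real obstacle here beyond bookkeeping; the only subtle point is that the hypothesis $\delta' \geq \epsilon/2$ is not needed for analytic tightness but rather to ensure that $\delta'$ is indeed the larger of the two fractional speeds, so that the formula $\platfeature = \max(\gamma_i - 1, \delta''/\delta')$ applies (otherwise the corresponding ratio would be $\delta'/\delta'' > 1$, which could exceed $\gamma_i - 1$ even when the sum $\delta' + \delta''$ is unchanged).
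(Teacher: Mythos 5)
Your proposal is correct and follows essentially the same route as the paper's proof: apply Theorem~\ref{t:framework}, note $\delta' \geq \delta''$ from the first condition so the sorted speeds are $(1,\ldots,1,\delta',\delta'')$, compute $\platfeature = \max(\gamma_i-1,\ \delta''/\delta')$, and use the second condition to get $\delta''/\delta' \leq \gamma_i - 1$ so that the bound collapses to $\frac{C_i+(\gamma_i-1)L_i}{\gamma_i} = D_i$. The only cosmetic difference is that you rearrange the schedulability test into $\platfeature \leq \gamma_i - 1$ before substituting, while the paper substitutes $\platfeature = \gamma_i - 1$ directly; the content is identical.
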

\begin{proof}
By Theorem \ref{t:framework}
we know the response time of $\tau_i$ is bounded by
\begin{equation}\label{e:dividetwo-1}
R \leq \frac{C_i+\platfeature{L_i}}{S_m}
\end{equation}
Since $\delta' + \delta'' =  \gamma_i - \lfloor \gamma_i \rfloor $
and $\delta' \geq ( \gamma_i - \lfloor \gamma_i \rfloor)/2 $, 
we know $\delta' \geq \delta''$.
So we can calculate $\platfeature$ 
of $\lfloor \gamma_i \rfloor$ dedicated processors and
two container tasks with load bounds $\delta'$ and $\delta''$ by:
\begin{align}
\platfeature & {=}
\max_{x=1}^{m}\left\lbrace \frac{\sumspeed_m-\sumspeed_x}{\speed_x}\right\rbrace \notag\\
 & {=}
\max\left(
\frac{\gamma_i - 1}{1}, \frac{\gamma_i - 2}{1}, \cdots, \frac{\gamma_i - \lfloor \gamma_i \rfloor}{1}, \frac{\delta''}{\delta'}, \frac{0}{\delta''} \right) \notag \\
 & {=}
 \max\left(
 \frac{\gamma_i - 1}{1}, \frac{\delta''}{\delta'} \right)  \label{e:divide-3}
\end{align}
By $\delta' + \delta'' =  \gamma_i - \lfloor \gamma_i \rfloor $ and
$\delta' \geq \frac{  \gamma_i - \lfloor \gamma_i \rfloor}{\gamma_i}$
we get $\frac{\delta''}{\delta'} \leq \gamma_i - 1$. Applying this
to (\ref{e:divide-3}) gives $\lambda = \gamma_i - 1$.
Moreover, we know $S_m = \lfloor \gamma_i \rfloor +  \delta' + \delta'' = \gamma_i$. Therefore, we
have
\[
R \leq \frac{C_i+\platfeature{L_i}}{S_m}=\frac{C_i+(\gamma_i-1)L_i}{\gamma_i}
\]
and by the definition of $\gamma_i$ in (\ref{e:gamma}) we know 
\[
\frac{C_i+(\gamma_i-1)L_i}{\gamma_i} = D_i
\]
so we can conclude $R_i \leq D_i$, and thus $\tau_i$ is schedulable.
\end{proof}

%

Based on the above discussions, we propose the second federated scheduling algorithm \secondSFS. The overall procedure of \secondSFS\ is similar to \firstSFS. The only difference is that
\secondSFS\ uses \Sched$^*$($S$, $\Omega$) to replace 
\Sched($S$, $\Omega$) in line \ref{line:11} of Algorithm  	\ref{al:first}. The pseudo-code of \Sched$^*$($S$, $\Omega$) 
is given in Algorithm  \ref{al:second}.
The inputs of \Sched$^*$ are $S$, the set of 
sequential tasks (including the generated container tasks
and the light tasks), and $\Omega$, the remaining processors to be shared by these sequential tasks.

There are infinitely many choices to divide a container task into two under the condition of Lemma \ref{l:dividetwo}.
Among these choices, on one simply dominates  others, since the quality of a choice
depends on how the tasks are partitioned to processors.
In \Sched$^*$($S$, $\Omega$), the container tasks are divided in an on-demand manner.
Each container task $\varphi_k$ of task $\tau_i$, apart from
its original load bound $\delta_k$, is
affiliated with a ${\delta_k^*}$, 
representing the minimal load bound of
the larger part if $\varphi_k$ is divided into two parts.
${\delta_k^*}$ is calculated according to Lemma \ref{l:dividetwo}:
\begin{align}\label{e:lowerupper}
{\delta_k^*} & = \max\left(\frac{\gamma_i - \lfloor \gamma_i \rfloor}{2},  \frac{\gamma_i - \lfloor \gamma_i \rfloor}{\gamma_i} \right)
\end{align}
For consistency, each light task $\tau_j$ is also affiliated with a ${\delta_j^*}$ which equals to its density $\delta_j = C_i/D_i$.

\Sched$^*$($S$, $\Omega$) works in three steps:

\begin{enumerate}
	\item It first partitions all the input container tasks and light tasks using the Worst-First packing strategy 
	using their ${\delta_k^*}$ as the metrics. 
	We use $\varphi(P_x)$ to denote the set of tasks have been assigned to processor $P_x$.
	If the sum of $\delta_k$ of all tasks in $\varphi(P_x)$ has  exceeded $1$, we stop assigning tasks to $P_x$ and move it to the set $\Psi$. 
	
	\item The total $\delta_k$ of tasks on each processor  $P_x$ in $\Psi$  is larger than $1$, therefore some of tasks on  $P_x$ must be divided into two, and one of them should be assigned to other processors.
	On the other hand, the total ${\delta_k^*}$ of some tasks on $P_x$
	is no larger than $1$, which guarantees that we can 
	divide tasks on $P_x$ to reduce its total $\delta_k$ to $1$.
	The function \Scrape($P_x$) divides container tasks on $P_x$
	and make the total load of $P_x$ to be exactly $1$ and returns the newly generated container tasks.
	The pseudo-code of \Scrape($P_x$) is shown in Algorithm \ref{al:scrape}.
	
	\item  Finally, \Partition($S$, $\Omega$) partitions all the generated container tasks in step 2) to the processors remained in $\Omega$ \\
	using the Worst-Fit packing strategy.
	After the first step, the total load of tasks on processors remained in 
	$\Omega$ is still smaller than $1$, i.e., they still have remaining available capacity and potentially can accommodate more tasks.
	\Partition($S$, $\Omega$) returns \textit{true} if tasks in $S$ can be successfully partitioned to  processors remained in $\Omega$, and returns \textit{false} otherwise.
\end{enumerate}

\begin{algorithm}
	\caption{\Sched$^*$($S$, $\Omega$) in \secondSFS.}
	\begin{algorithmic}[1]
		\STATE Sort elements in $S$ in non-increasing order of their $\delta_i^*$
		\STATE $\Psi = \emptyset$
		\FOR{each sequential task $\varphi_k$ (including both container tasks and light tasks)} \label{line:second-3}
		\STATE $P_x =$ a processor in $\Omega$ with the minimal $ \sum_{\varphi_i \in \varphi(P_x)} \delta_i^*$ and satisfying 
		\vspace{-0.12in}
		\[
		\delta_k^* + \sum_{\varphi_i \in \varphi(P_x)} \delta_i^*  <=1
		\vspace{-0.1in} 	
		\]
		\label{line:second-4}
		\STATE \textbf{if} {$P_x = \textrm{NULL}$}
		\textbf{then} \textbf{return} \textit{failure};
		\label{line:second-5}
		\STATE $\varphi(P_x) = \varphi(P_x) \cup \{\varphi_k\}$   		\label{line:second-6}
		\STATE \textbf{if} {$\sum_{\varphi_i \in \varphi(P_x)} \delta_i > 1$}
		\textbf{then} move $P_x$ from $\Omega$ to $\Psi$
		\label{line:second-7}
		\ENDFOR \label{line:second-8}
		\STATE $S = \emptyset$
		\FOR{each core $P_x \in \Psi$ }
		\STATE $S = S ~ \cup $ \Scrape($P_x$); 
		\ENDFOR
		\STATE 
		\textbf{if} \Partition($S$, $\Omega$) \textbf{then} \textbf{return} \textit{success} \textbf{else return} \textit{failure}
	\end{algorithmic}
	\label{al:second}
\end{algorithm}

\begin{algorithm}
	\caption{\Scrape($P_x$).}
	\begin{algorithmic}[1]
		\STATE{ $SS = \emptyset$}
		\STATE{ $w = \sum_{\varphi_k \in \varphi(P_x)} \delta_k - 1$}
		\FOR{each container task $\varphi_k \in \varphi(P_x)$} \label{line:second-1}
		\IF{$\delta_k - \delta_k^* > w$}
		\STATE divide $\varphi_k$ into $\varphi_k'$ and $\varphi_k''$ such that
		\[
		\delta_k'' =  w \wedge  
		\delta_k' = \delta_k - \delta_k''
		\]
		\STATE put $\varphi_k''$ in $SS$ ($\varphi_k'$ still assigned to $P_x$)
		\STATE \textbf{return} $SS$
		\ELSE
		\STATE divide $\varphi_k$ into $\varphi_k'$ and $\varphi_k''$ such that
		\[
		\delta_k' =  \delta_k^* \wedge  \delta_k'' = \delta_k - \delta_k^*
		\]
		\STATE put $\varphi_k''$ in $SS$ ($\varphi_k'$ still assigned to $P_x$)	
		\STATE $w = w -  \delta_k''$					
		\ENDIF		
		\ENDFOR
	\end{algorithmic}
	\label{al:scrape}
\end{algorithm}

Appendix-C includes an example to illustrate \secondSFS. 

The number of extra vertices created by runtime dispatching of each DAG task in \secondSFS\ is bounded as follows.

\begin{theorem}\label{t:secondoverhead}
Under \secondSFS,
the number of extra vertices created in each DAG task is bounded by $2N$, where $N$ is the number of vertices in the original DAG. 
\end{theorem}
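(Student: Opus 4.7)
I will bound $T$, the total number of split operations performed by Algorithm~\ref{a:dispatcher} on the DAG task $\tau_i$, since each split produces exactly one extra vertex (the $v''$ put back into $S$ in line~\ref{line:split}). In \secondSFS, line~\ref{line:7} of Algorithm~\ref{al:first} creates a single fractional container for $\tau_i$, and \Scrape\ divides it at most once, so at runtime $\tau_i$ is dispatched onto $\lfloor\gamma_i\rfloor$ dedicated containers together with at most two fractional containers $\varphi_a,\varphi_b$ (load bound $<1$). The split branch of line~\ref{line:if} fires only when some occupied container has load bound \emph{strictly} larger than that of the fastest empty one; since no container has load bound $>1$, dedicated containers never split. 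Therefore $T=S_a+S_b$, where $S_p$ counts the splits on $\varphi_p$.

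Next I would show that on each fractional container $\varphi_p$ the alignment values used by successive splits are strictly increasing, hence pairwise distinct. Suppose splits on $\varphi_p$ occur at times $t_1<t_2$ with aligned deadlines $d'(t_1)$ and $d'(t_2)$ assigned in line~\ref{line:8}. Then $\varphi_p$ is occupied from $t_1$ until its deadline $d'(t_1)$, so the second dispatch forces $t_2\ge d'(t_1)$; moreover the split branch of line~\ref{line:if} requires $d'(t_2)>t_2$, giving $d'(t_2)>d'(t_1)$. Consequently $S_p$ is at most the number of distinct deadline values that ever appear on containers strictly faster than $\varphi_p$ during the execution.

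The crux—and the step I expect to be the main obstacle—is to bound the total number of distinct deadline values in the entire execution by $N$, \emph{independently of} $T$ (so as to avoid circularity). Each invocation of Algorithm~\ref{a:dispatcher} removes one vertex from $S$ and assigns either that vertex itself (with a fresh deadline produced by line~\ref{line:5}) or its $v'$-half (with an aligned deadline copied from an already-existing one in line~\ref{line:8}) to a container. Because every split injects one new vertex ($v''$) back into $S$, the algorithm is invoked exactly $N+T$ times in total; of these, exactly $T$ invocations take the split branch and exactly $N$ take the non-split branch. Hence exactly $N$ deadline values are created fresh, and the remaining $T$ merely reuse existing ones, so the set of distinct deadline values in the whole execution has cardinality at most $N$ regardless of $T$.

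Combining the two previous observations yields $S_a\le N$ and $S_b\le N$, since the alignment targets of splits on each fractional container are distinct deadline values drawn from a set of size at most $N$. Adding these gives $T=S_a+S_b\le 2N$, which is precisely Theorem~\ref{t:secondoverhead}. Edge cases are automatic: if \Scrape\ does not divide the original fractional container, then $\varphi_b$ does not exist, $S_b=0$, and the bound reduces to $T\le N$, consistent with Theorem~\ref{t:firstoverhead}.
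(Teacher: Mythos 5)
Your proof is correct, but it proceeds along a genuinely different route than the paper's Appendix-D argument. The paper works container-by-container: it reuses the proof of Theorem~\ref{t:firstoverhead} to bound splits on the larger fractional container $\delta'$ by $N$ (each split aligns with a deadline on a dedicated processor), and then for $\delta''$ it constructs the sets $A$ (parts executed on dedicated processors) and $B$ (parts on $\delta'$ whose deadlines differ from those on dedicated processors), shows $|A\cup B|\leq N$, and attributes each split on $\delta''$ to a distinct such deadline. You instead make a global accounting argument: Algorithm~\ref{a:dispatcher} is invoked exactly $N+T$ times, exactly $T$ of these take the split branch, so exactly $N$ fresh deadline values are ever created (split invocations only copy an existing $d'$), giving at most $N$ distinct deadline values independently of $T$; combined with your monotonicity observation that successive splits on the same fractional container align with strictly increasing (hence distinct) deadlines, this yields at most $N$ splits per fractional container and $T\leq 2N$. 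Your approach buys cleaner bookkeeping (no $A$/$B$ set manipulation), it subsumes Theorem~\ref{t:firstoverhead} as the one-fractional-container special case, and it generalizes immediately to $k$ fractional containers with bound $kN$; the paper's approach is more structural in that it identifies \emph{which} vertex parts the aligned deadlines come from. One small imprecision: the inequality $d'(t_2)>t_2$ is not \enquote{required by the split branch} of line~\ref{line:if}; it follows from the occupancy semantics (an occupied container only becomes empty when time reaches its absolute deadline, so any deadline contributing to $d'$ at time $t_2$ exceeds $t_2$). With that justification substituted, the argument is sound.
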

The intuition of the proof is similar to that of Theorem \ref{t:firstoverhead}. The difference is that
\secondSFS\ uses two container tasks, so the number of splitting is doubled in the worst-case. A complete proof of the theorem is provided in Appendix-D.

\section{Performance evaluations}

\begin{figure*}
	\centering
c	\subfigure[$m=8$]{\includegraphics[width=2in]{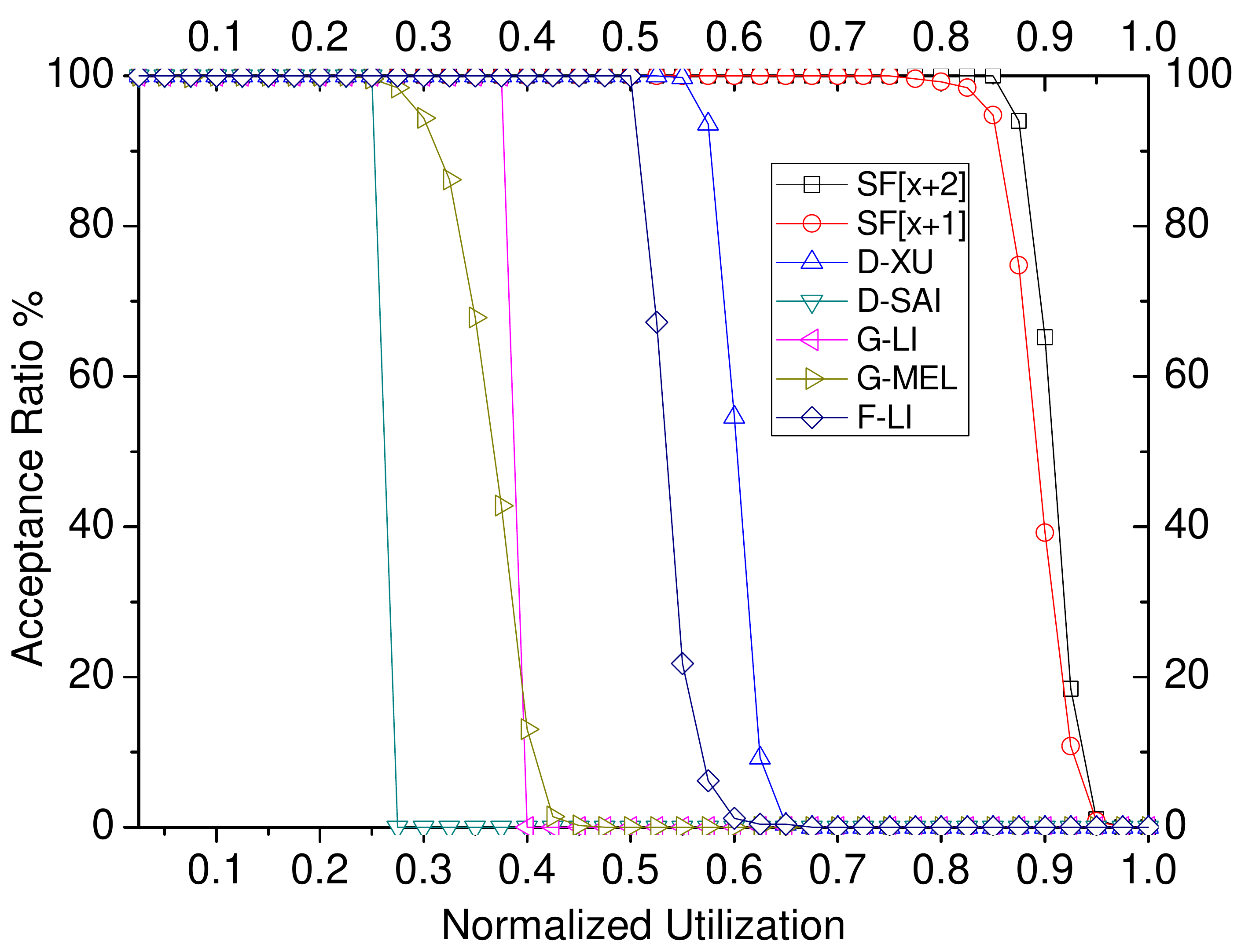}}
	\hspace{0.2in}
	\subfigure[$m=16$]{\includegraphics[width=2in]{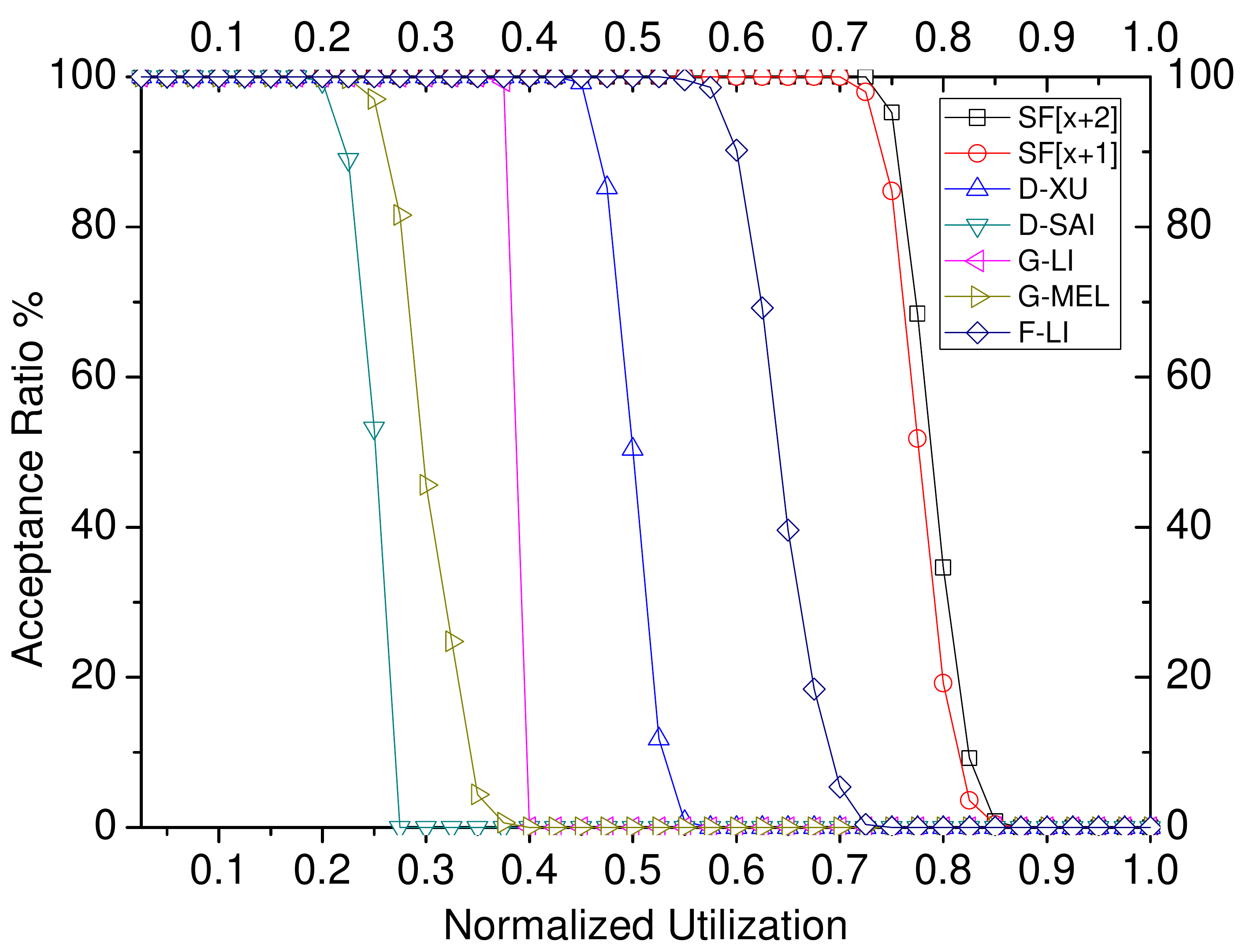}}
	\hspace{0.2in}
	\subfigure[$m=32$]{\includegraphics[width=2in]{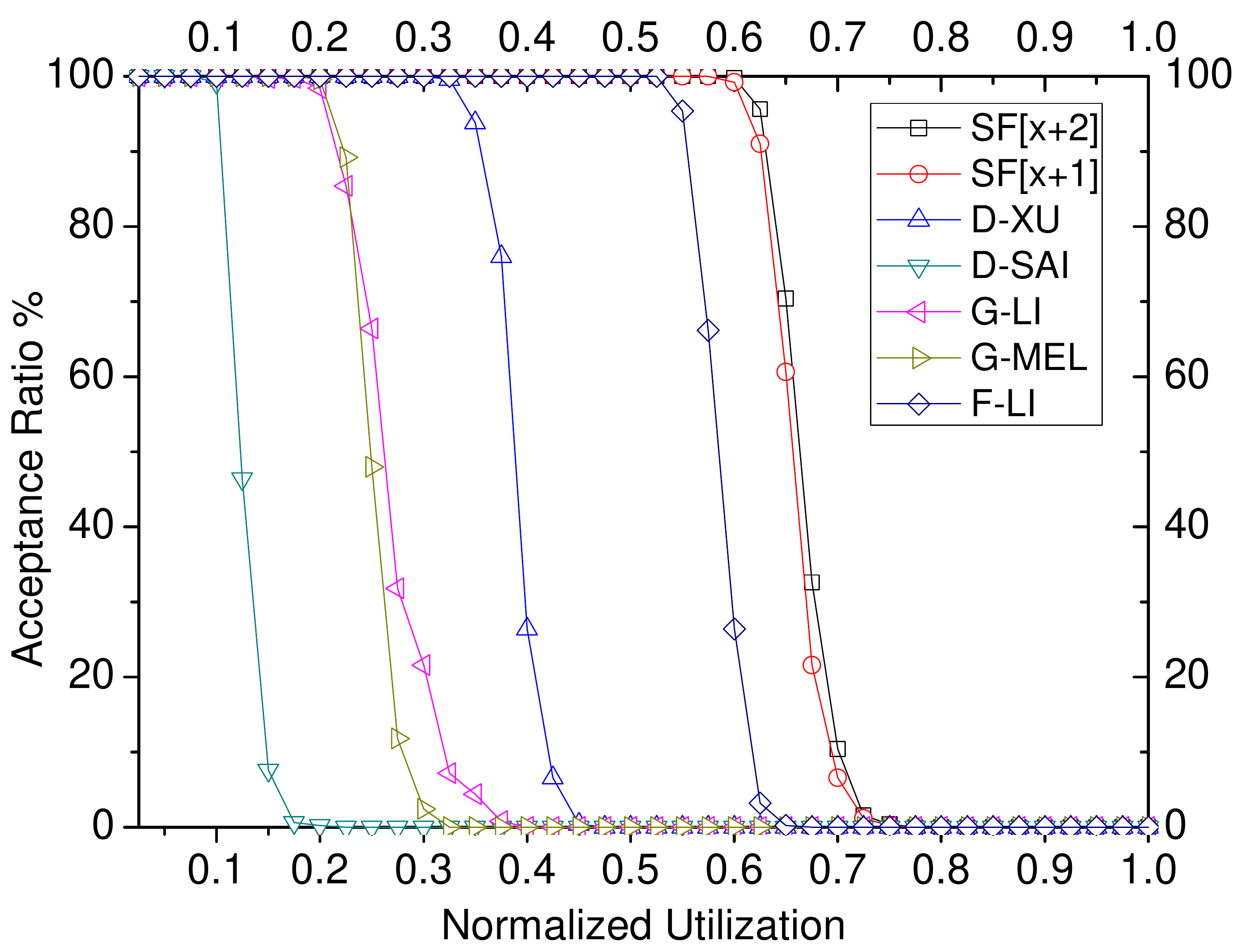}}
	\caption{Comparing \firstSFS\ and \secondSFS\ with the state-of-the-art with different number of processors.}
	\label{fig:8cores}
\end{figure*}

\begin{figure*}
	\centering
	\subfigure[Comparison with different $p$.]{\includegraphics[width=2in]{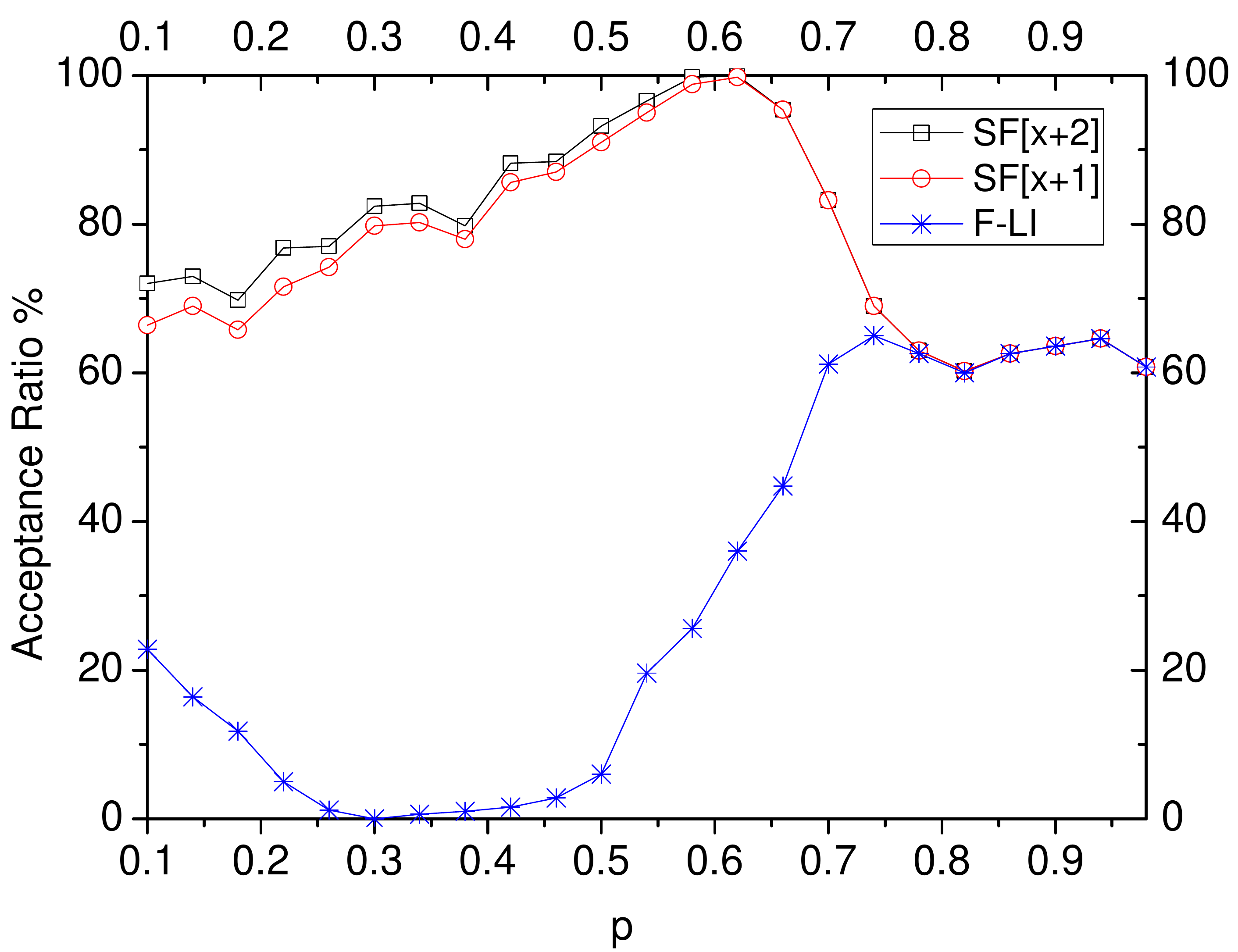}}
		\hspace{0.2in}
	\subfigure[Comparison with different average $\gamma_i$.]{\includegraphics[width=2in]{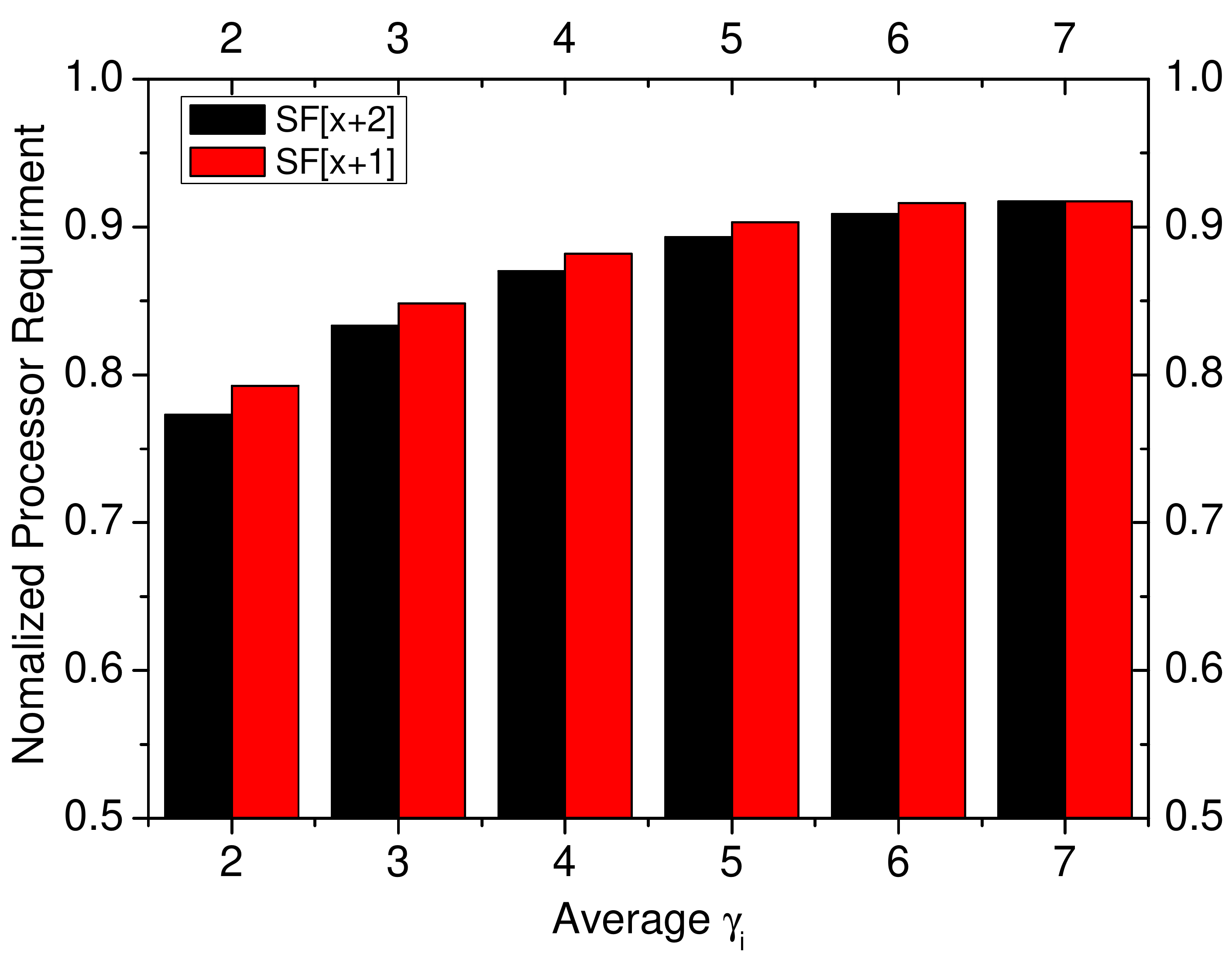}}	
		\hspace{0.2in}
		\subfigure[Comparison with different average seqeutial task load.]{\includegraphics[width=2in]{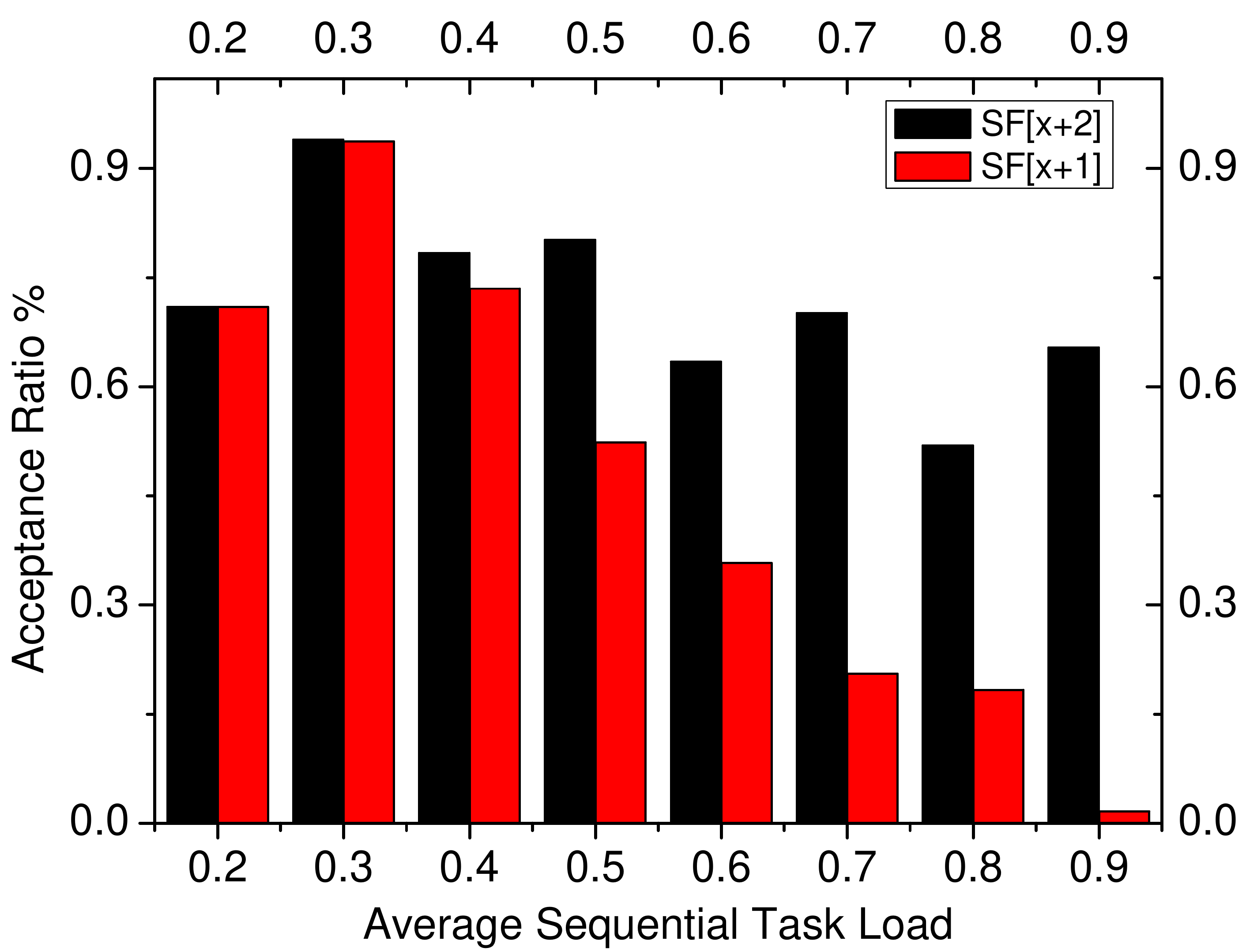}}
	\caption{Comparing \firstSFS\ and \secondSFS\ with federated scheduling in different dimensions.}
	\label{fig:compare-2}
\end{figure*}
In this section, we evaluate the performance of the proposed 
semi-federated algorithms. 
We compare the
acceptance ratio of \firstSFS\ and \secondSFS with 
the state-of-the-art algorithms and analysis methods in all the three types of parallel real-time task scheduling algorithms: 
\begin{itemize}
	\item \textbf{Decomposition-based scheduling}: 
(i) The EDF-based scheduling and analysis techniques developed in \cite{saifullah2014parallel}, denoted by $\DSAI$.
(ii) The EDF-based scheduling and analysis techniques in \cite{jiang2016decomposition}, denoted by $\DXU$.

	\item \textbf{Global scheduling}: (i) The schedulability test based on capacity augmentation bounds for global EDF scheduling in \cite{li2014analysis}, denoted by $\GLI$. (ii) The schedulability test based on response time analysis for global EDF scheduling in \cite{melani2015response}, denoted by $\GMEL$. $\GMEL$ was developed for a more general DAG model with conditional branching, but can be directly
		applied to the DAG model of this paper, which is a special case of \cite{melani2015response}.

	\item \textbf{Federated scheduling}: the schedulability test based on the processor allocation strategy in \cite{li2014analysis}, denoted by $\FLI$.
\end{itemize}
Other methods not included in our comparison are either theoretically dominated or shown to be significantly outperformed (with empirical evaluations) by one of the above  methods.

The task sets are generated using the Erd\"{o}s-R\'{e}nyi method $G(n_i , p)$ \cite{cordeiro2010random}.
For each task, the number of vertices is randomly chosen in the range $[50, 250]$ and the worst-case execution time of each vertex
is randomly picked in the range $[50, 100]$, and a valid period is generated according to  a similar method with \cite{saifullah2014parallel}. The period $T_i$ (we set $D_i=T_i$) is set to be $(L_i+\frac{C_i}{0.4m\ast{U}})*(1+0.25*Gamma(2,1))$ where $Gamma(2,1)$ is a random value by using gamma distribution and $U$ is the normalized utilization of the task set (total utilization divided by the number of processors $m$). In this way, we can: (i) make a valid period, (ii) generate a reasonable number of tasks when the processor number and total utilization of the task sets change.
For each possible edge we generate a random value in the range $[0, 1]$
and add the edge to the graph only if the generated value is less than a predefined threshold $p$.
In general the critical path of a DAG generated using the Erd\"{o}s-R\'{e}nyi method becomes longer as $p$ increases, which makes the task more sequential.
We compare the \emph{acceptance ratio} of each method, which is the ratio between the number of task sets deemed to be schedulable by a method and the total number of task sets in the experiments (of a specific
group). For each parameter configuration, we generate 10000 task sets.

Figure \ref{fig:8cores} compares the acceptance ratios of different methods 
with fixed $p=0.1$ and different number of processors. In each figure, the experiment results are grouped by the normalized utilization task set (x-axis).
We can see that our two semi-federated scheduling algorithms significantly outperform all the state-of-the-art methods
in different categories. 

Then we made in-depth comparison between federated scheduling (\FLI) and our semi-federated scheduling algorithms.
Figure \ref{fig:compare-2}-(a) shows 
the acceptance ratio with $m=16$ and different $p$ values (x-axis). We can see that semi-federated scheduling significantly 
outperforms federated scheduling except when 
$p$ is large, i.e., when tasks are very sequential. In the extreme case, when tasks 
are all sequential, both federated and semi-federated scheduling degrade to 
traditional multiprocessor scheduling of sequential tasks.

Figure \ref{fig:compare-2}-(b) compares the minimal number of processors required by the federated scheduling and semi-federated scheduling algorithms to make the task set schedulable.
In these experiments we set $p=0.1$.
The experiment results are grouped by the average minimal capacity requirement $\gamma_i$ of all heavy tasks in a task set.
A value $x$ on the x-axis represents  
range $(x-1, x]$.
The y-axis is the average ratio between the minimal number of processors required by \firstSFS (\secondSFS) 
and the minimal number of processors required by 
\FLI, to make the task set schedulable. We can see the resource saving by 
\firstSFS (\secondSFS) is more significant when $\gamma_i$ is smaller. 

Figure \ref{fig:compare-2}-(c) compares our two semi-federated scheduling algorithms, 
in which all task sets have a fixed total normalized utilization $0.8$, and we set $m=16$
and $p=0.1$.
The experiment results are grouped 
by the average load of the sequential tasks
(container tasks with fractional load bounds and light tasks) participating the partitioning on the shared processors (i.e., tasks in $S$ for
\Sched($S$, $\Omega$) and \Sched$^*$($S$, $\Omega$)).
A value $x$ on the x-axis represents  
range $(x-0.1, x]$.
As expected, when the task sizes are larger, the 
performance of \firstSFS\ degrades.
\secondSFS\ maintains good performance with large tasks since dividing a large container task into two significantly improves resource utilization.

\section{Related Work}\label{s:related}
 Early work on real-time scheduling of parallel tasks assume restricted constraints on task structures \cite{manimaran1998new,lee2006optimal,kato2009gang,lakshmanan2010scheduling,saifullah2013multi,kim2013parallel,nelissen2012techniques,maia2014response,andersson2012analyzing, axer2013response}.
For example, a Gang EDF scheduling algorithm was proposed in \cite{kato2009gang} for moldable parallel tasks. The parallel synchronous task model was studied in \cite{lakshmanan2010scheduling, saifullah2013multi,kim2013parallel,nelissen2012techniques,maia2014response,andersson2012analyzing, axer2013response}.
Real-time scheduling algorithms for DAG tasks can be classified into three paradigms:
(i) decomposition-based scheduling \cite{saifullah2014parallel,qamhieh2013global,qamhieh2014stretching,jiang2016decomposition}, (ii) global scheduling (without decomposition) \cite{bonifaci2013feasibility,baruah2014improved,melani2015response} ,
and (iii) federated scheduling \cite{li2014analysis,baruah2015federatedconstraind,baruah2015federated,baruah2015federatedconditional} .

The decomposition-based scheduling algorithms transform each DAG into several sequential sporadic sub-tasks and schedule them by traditional multiprocessor scheduling algorithms.
In \cite{saifullah2014parallel}, a capacity augmentation bound of $4$ was proved for global EDF. A schedulability test in \cite{qamhieh2013global} was provided to achieve a lower capacity augmentation bound in most cases, while in other cases above $4$. In \cite{qamhieh2014stretching},  a capacity augmentation bound of $\frac{3+\sqrt{5}}{2}$ was proved for some special task sets. In \cite{jiang2016decomposition}, a decomposition strategy exploring the structure features of the DAG was proposed, which has capacity augmentation bound between $2$ and $4$, depending on the DAG structure.

For global scheduling (without decomposition), a resource augmentation bound of $2$ was proved in \cite{baruah2012generalized} for a single DAG. In \cite{bonifaci2013feasibility}, \cite{li2015global}, a resource augmentation bound of $2-1/m$
and a capacity augmentation bound of $4-2/m$ were proved under global EDF.
A pseudo-polynomial time sufficient schedulability test was presented in \cite{bonifaci2013feasibility},
which later was generalized and dominated by \cite{baruah2014improved} for constrained deadline DAGs.
\cite{li2015global} proved the capacity augmentation bound  $\frac{3+\sqrt{5}}{2}$ for EDF and 3.732 for RM. In \cite{parri2015response} a schedulability test for arbitrary deadline DAG was derived based on response-time analysis.

For federated scheduling, \cite{li2014analysis} proposed an algorithm for DAGs with implicit deadline which has a capacity augmentation bound of $2$. Later, federated scheduling was generalized to constrained-deadline DAGs\cite{baruah2015federatedconstraind}, arbitary-deadline DAGs \cite{baruah2015federated} as well as DAGs
with conditional branching \cite{baruah2015federatedconditional}.  


The scheduling and analysis of 
sequential real-time tasks on
\textit{uniform} multiprocessors 
was studied in \cite{funk2001line, funk03ecrts, baruah08rtss}. Recently, Yang and Anderson \cite{yang2014optimal} investigated global EDF scheduling of
npc-sporadic (no precedence constraints) tasks on uniform multiprocessor platform. This study was later extended to DAG-based task model on heterogeneous multiprocessors platform in \cite{yang2016reducing} where a release-enforcer technique was used to transformed a DAG-based task into several npc-sporadic jobs thus eliminating the intra precedence constraints and provide analysis upon the response time.

\section{CONCLUSIONS}

We propose the
semi-federate scheduling approach to solve the resource waste problem
of federated scheduling.
Experimental results show significantly performance improvements of our approach comparing with the state-of-the-art for scheduling parallel real-time tasks on multi-cores. 
In the next step, we will integrate our approach with the work-stealing strategy \cite{li2016randomized} to support hight resource utilization with both hard real-time and soft real-time tasks at the same time.

\bibliographystyle{IEEEtran}
\bibliography{IEEEabrv,ADA1}

\begin{thebibliography}{10}
\providecommand{\url}[1]{#1}
\csname url@samestyle\endcsname
\providecommand{\newblock}{\relax}
\providecommand{\bibinfo}[2]{#2}
\providecommand{\BIBentrySTDinterwordspacing}{\spaceskip=0pt\relax}
\providecommand{\BIBentryALTinterwordstretchfactor}{4}
\providecommand{\BIBentryALTinterwordspacing}{\spaceskip=\fontdimen2\font plus
\BIBentryALTinterwordstretchfactor\fontdimen3\font minus
  \fontdimen4\font\relax}
\providecommand{\BIBforeignlanguage}[2]{{%
\expandafter\ifx\csname l@#1\endcsname\relax
\typeout{** WARNING: IEEEtran.bst: No hyphenation pattern has been}%
\typeout{** loaded for the language `#1'. Using the pattern for}%
\typeout{** the default language instead.}%
\else
\language=\csname l@#1\endcsname
\fi
#2}}
\providecommand{\BIBdecl}{\relax}
\BIBdecl

\bibitem{li2014analysis}
J.~Li, J.~J. Chen, K.~Agrawal, C.~Lu, C.~Gill, and A.~Saifullah, ``Analysis of
  federated and global scheduling for parallel real-time tasks,'' in
  \emph{ECRTS}, 2014.

\bibitem{baruah2015global}
S.~Baruah, V.~Bonifaci, and A.~Marchetti-Spaccamela, ``The global edf
  scheduling of systems of conditional sporadic dag tasks,'' in \emph{ECRTS},
  2015.

\bibitem{melani2015response}
A.~Melani, M.~Bertogna, V.~Bonifaci, A.~Marchetti-Spaccamela, and G.~C.
  Buttazzo, ``Response-time analysis of conditional dag tasks in multiprocessor
  systems,'' in \emph{ECRTS}, 2015.

\bibitem{graham1969bounds}
R.~L. Graham, ``Bounds on multiprocessing timing anomalies,'' \emph{SIAM
  journal on Applied Mathematics}, 1969.

\bibitem{baruah07rtss}
S.~Baruah, ``Techniques for multiprocessor global schedulability analysis,''
  \emph{RTSS}, 2007.

\bibitem{baruah05rtss}
S.~Baruah and N.~Fisher, ``The partitioned multiprocessor scheduling of
  sporadic task systems,'' \emph{RTSS}, 2005.

\bibitem{funk2001line}
S.~Funk, J.~Goossens, and S.~Baruah, ``On-line scheduling on uniform
  multiprocessors,'' in \emph{RTSS}, 2001.

\bibitem{johnson1974worst}
D.~S. Johnson, A.~Demers, J.~D. Ullman, M.~R. Garey, and R.~L. Graham,
  ``Worst-case performance bounds for simple one-dimensional packing
  algorithms,'' \emph{SIAM Journal on Computing}, 1974.

\bibitem{baruah99gmf}
S.~Baruah, D.~Chen, S.~Gorinsky, and A.~Mok, ``Generalized multiframe tasks,''
  \emph{Real-Time Systems}, 1999.

\bibitem{saifullah2014parallel}
A.~Saifullah, D.~Ferry, J.~Li, K.~Agrawal, C.~Lu, and C.~D. Gill, ``Parallel
  real-time scheduling of dags,'' \emph{Parallel and Distributed Systems, IEEE
  Transactions on}, 2014.

\bibitem{jiang2016decomposition}
X.~Jiang, X.~Long, N.~Guan, and H.~Wan, ``On the decomposition-based global edf
  scheduling of parallel real-time tasks,'' in \emph{RTSS}, 2016.

\bibitem{cordeiro2010random}
D.~Cordeiro, G.~Mouni{\'e}, S.~Perarnau, D.~Trystram, J.-M. Vincent, and
  F.~Wagner, ``Random graph generation for scheduling simulations,'' in
  \emph{ICST}, 2010.

\bibitem{manimaran1998new}
G.~Manimaran, C.~S.~R. Murthy, and K.~Ramamritham, ``A new approach for
  scheduling of parallelizable tasks in real-time multiprocessor systems,''
  \emph{Real-Time Systems}, 1998.

\bibitem{lee2006optimal}
W.~Y. Lee and L.~Heejo, ``Optimal scheduling for real-time parallel tasks,''
  \emph{IEICE transactions on information and systems}, 2006.

\bibitem{kato2009gang}
S.~Kato and Y.~Ishikawa, ``Gang edf scheduling of parallel task systems,'' in
  \emph{RTSS}, 2009.

\bibitem{lakshmanan2010scheduling}
K.~Lakshmanan, S.~Kato, and R.~Rajkumar, ``Scheduling parallel real-time tasks
  on multi-core processors,'' in \emph{RTSS}, 2010.

\bibitem{saifullah2013multi}
A.~Saifullah, J.~Li, K.~Agrawal, C.~Lu, and C.~Gill, ``Multi-core real-time
  scheduling for generalized parallel task models,'' \emph{Real-Time Systems},
  2013.

\bibitem{kim2013parallel}
J.~Kim, H.~Kim, K.~Lakshmanan, and R.~R. Rajkumar, ``Parallel scheduling for
  cyber-physical systems: Analysis and case study on a self-driving car,'' in
  \emph{ICCPS}, 2013.

\bibitem{nelissen2012techniques}
G.~Nelissen, V.~Berten, J.~Goossens, and D.~Milojevic, ``Techniques optimizing
  the number of processors to schedule multi-threaded tasks,'' in \emph{ECRTS},
  2012.

\bibitem{maia2014response}
C.~Maia, M.~Bertogna, L.~Nogueira, and L.~M. Pinho, ``Response-time analysis of
  synchronous parallel tasks in multiprocessor systems,'' in \emph{RTNS}, 2014.

\bibitem{andersson2012analyzing}
B.~Andersson and D.~de~Niz, ``Analyzing global-edf for multiprocessor
  scheduling of parallel tasks,'' in \emph{OPODIS}, 2012.

\bibitem{axer2013response}
P.~Axer, S.~Quinton, M.~Neukirchner, R.~Ernst, B.~Dobel, and H.~Hartig,
  ``Response-time analysis of parallel fork-join workloads with real-time
  constraints,'' in \emph{ECRTS}, 2013.

\bibitem{qamhieh2013global}
M.~Qamhieh, F.~Fauberteau, L.~George, and S.~Midonnet, ``Global edf scheduling
  of directed acyclic graphs on multiprocessor systems,'' in \emph{RTNS}, 2013.

\bibitem{qamhieh2014stretching}
M.~Qamhieh, L.~George, and S.~Midonnet, ``A stretching algorithm for parallel
  real-time dag tasks on multiprocessor systems,'' in \emph{RTNS}, 2014.

\bibitem{bonifaci2013feasibility}
V.~Bonifaci, A.~Marchetti-Spaccamela, S.~Stiller, and A.~Wiese, ``Feasibility
  analysis in the sporadic dag task model,'' in \emph{ECRTS}, 2013.

\bibitem{baruah2014improved}
S.~Baruah, ``Improved multiprocessor global schedulability analysis of sporadic
  dag task systems,'' in \emph{ECRTS}, 2014.

\bibitem{baruah2015federatedconstraind}
------, ``The federated scheduling of constrained-deadline sporadic dag task
  systems,'' in \emph{DATE}, 2015.

\bibitem{baruah2015federated}
------, ``Federated scheduling of sporadic dag task systems,'' in \emph{IPDPS},
  2015.

\bibitem{baruah2015federatedconditional}
------, ``The federated scheduling of systems of conditional sporadic dag
  tasks,'' in \emph{EMSOFT}, 2015.

\bibitem{baruah2012generalized}
S.~Baruah, V.~Bonifaci, A.~Marchetti-Spaccamela, L.~Stougie, and A.~Wiese, ``A
  generalized parallel task model for recurrent real-time processes,'' in
  \emph{RTSS}, 2012.

\bibitem{li2015global}
J.~Li, K.~Agrawal, C.~Lu, and C.~Gill, ``Outstanding paper award: Analysis of
  global edf for parallel tasks,'' in \emph{ECRTS}, 2013.

\bibitem{parri2015response}
A.~Parri, A.~Biondi, and M.~Marinoni, ``Response time analysis for g-edf and
  g-dm scheduling of sporadic dag-tasks with arbitrary deadline,'' in
  \emph{RTNS}, 2015.

\bibitem{funk03ecrts}
S.~Funk and S.~Baruah, ``Characteristics of edf schedulability on uniform
  multiprocessors,'' \emph{ECRTS}, 2003.

\bibitem{baruah08rtss}
S.~Baruah and J.~Goossens:, ``The edf scheduling of sporadic task systems on
  uniform multiprocessors,'' \emph{RTSS}, 2008.

\bibitem{yang2014optimal}
K.~Yang and J.~H. Anderson, ``Optimal gedf-based schedulers that allow
  intra-task parallelism on heterogeneous multiprocessors,'' in
  \emph{ESTIMedia}, 2014.

\bibitem{yang2016reducing}
K.~Yang, M.~Yang, and J.~H. Anderson, ``Reducing response-time bounds for
  dag-based task systems on heterogeneous multicore platforms,'' in
  \emph{RTNS}, 2016.

\bibitem{li2016randomized}
J.~Li, S.~Dinh, K.~Kieselbach, K.~Agrawal, C.~Gill, and C.~Lu, ``Randomized
  work stealing for large scale soft real-time systems,'' in \emph{RTSS}, 2016.

\end{thebibliography}

\newpage
\section*{Appendix-A: Response Time Bounds without  Inter-Processor Migration}

The work-conserving scheduling rules for uniform multiprocessors in Section \ref{ss:work-conserving}
requires the vertices to migrate from slower processors to faster processors whenever possible.
If such migration is forbidden, the resource may be significantly wasted and the response time can be much larger.
We say a scheduling algorithm is \textit{weakly work-conserving} if only the first work-conserving rule in Section \ref{ss:work-conserving}
is satisfied and a vertex is not allowed to migrate from one processor to another. The response time of a DAG task under weakly conserving scheduling is bounded by the following theorem:

\begin{theorem}\label{l:withoutmigration}
	Give $m$ uniform processors with speeds $\{\speed_1,\speed_2,...,\speed_m\}$ (sorted in non-increasing order).  
	The response time of a DAG task $\tau_i$ by a weakly work-conserving scheduling algorithm is bounded by
	\begin{equation}\label{e:boundwithoutmigration}
	\response\leq\frac{L_i}{\speed_m}+\frac{C_i-L_i}{S_m}
	\end{equation}
\end{theorem}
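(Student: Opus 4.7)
The plan is to mirror the Graham-style argument used in the proof of Theorem~\ref{th:singlebound} but adapted to the weakly work-conserving setting, where the faster-first rule is dropped and migration is forbidden. First I would partition $[0, \response]$ into two parts: $I$, the total length of intervals during which all $m$ processors are busy, and $B$, the total length of intervals during which at least one processor is idle, so $\response = I + B$. I would then construct a path $\pi = \{v_1, \dots, v_z\}$ by starting from $v_z$, the latest-finishing vertex of $\tau_i$, and iteratively choosing $v_{k-1}$ to be the latest-finishing predecessor of $v_k$ until a source vertex is reached.

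The key structural step is to show that at every time $t \in B$, some vertex of $\pi$ is being executed. I would let $v_k$ be the smallest-indexed vertex on $\pi$ that has not finished by $t$; by construction $v_{k-1}$ is the latest-finishing predecessor of $v_k$, so every predecessor of $v_k$ has already finished and $v_k$ is eligible at $t$. Since some processor is idle at $t$, weak work-conservation forces $v_k$ to be running.

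Next, I would perform the work accounting. Let $w_{\pi,B}$ denote the total workload executed on vertices of $\pi$ throughout $B$. Because migration is forbidden and every processor runs at speed at least $\speed_m$, the processor that executes the distinguished $\pi$-vertex during any subinterval of $B$ contributes work at rate at least $\speed_m$, yielding $w_{\pi,B} \geq B\cdot \speed_m$, i.e., $B \leq w_{\pi,B}/\speed_m$. During $I$ all processors are busy, so the work done during $I$ equals exactly $I\cdot \sumspeed_m$. The work counted in $I\cdot \sumspeed_m$ and the work counted in $w_{\pi,B}$ are disjoint, and together they are bounded by the total workload $C_i$ of the task, giving $I\cdot \sumspeed_m + w_{\pi,B} \leq C_i$, hence $I \leq (C_i - w_{\pi,B})/\sumspeed_m$.

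Combining the two bounds yields
\[
\response = I + B \leq \frac{C_i - w_{\pi,B}}{\sumspeed_m} + \frac{w_{\pi,B}}{\speed_m} = \frac{C_i}{\sumspeed_m} + w_{\pi,B}\left(\frac{1}{\speed_m}-\frac{1}{\sumspeed_m}\right).
\]
Since $\speed_m \leq \sumspeed_m$, the coefficient of $w_{\pi,B}$ is non-negative, so the right-hand side is maximized at the largest feasible value of $w_{\pi,B}$. Using $w_{\pi,B} \leq l_\pi \leq L_i$ and substituting $w_{\pi,B}=L_i$ delivers the claimed bound $\response \leq L_i/\speed_m + (C_i - L_i)/\sumspeed_m$. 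The main subtlety I expect is the structural claim that a $\pi$-vertex always occupies some processor throughout $B$: without the faster-first rule of Section~\ref{ss:work-conserving}, the argument from Theorem~\ref{th:singlebound} that compares different processor-speed tiers no longer applies, and one must rely purely on eligibility together with the single remaining work-conserving condition.
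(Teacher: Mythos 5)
Your proposal is correct and follows essentially the same route as the paper's own proof: the same idle/busy decomposition of $[0,\response]$, the same backward construction of the path $\pi$ via latest-finishing predecessors, the same bounds $B \leq \consume(\pi)/\speed_m$ and $I \leq (C_i-\consume(\pi))/\sumspeed_m$, and the final substitution $\consume(\pi)\leq L_i$ (which you justify slightly more explicitly via the sign of the coefficient). The only cosmetic difference is that your rate argument on $B$ does not actually need the no-migration assumption, only the slowest speed $\speed_m$ and the first work-conserving rule, which is also all the paper uses.
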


\begin{proof}	
	Without loss of generality, we assume the task under analysis releases an
	instance at time 0, and thus $R$ is the time point when the
	currently release of $\tau_i$ is finished. In the time window $[0, R]$,
	let $\alpha$ denote the total length of intervals during which the at least one processor is idle and $\beta$ denote the total length of the intervals during which all processors are busy.
	Therefore, we know $R = \alpha + \beta$. 
Let $v_z$ be the latest finished vertex in the DAG and $v_{z-1}$ be the latest finished vertex among all predecessors of $v_z$, repeat this way until no predecessor can be found, we can simply construct a path $\pi=\{v_1,v_2,...,v_{z-1},v_z\}$. The fundamental observation is that all processors must be busy between the finishing time of $v_k$ and the starting time of $v_{k+1}$ where $1\leq{k}\leq{z-1}$.	\forget{Let $\platform$ be an arbitrary path in the DAG starting from the head vertex and ending at the tail vertex.} We use $\consume(\platform)$ to denote the total amount of workload executed for vertices along path $\platform$ in all the time intervals of $\alpha$. The total work done in all time interval of $\beta$ is at most $C_i-\consume(\platform)$. Since at least
	one processor is idle in time intervals of $\alpha$, $\platform$ must contain a vertex being executed
	in these time intervals (since at any time point before $R$, there
	is at least one eligible vertex along any path) and $\speed_m$ is the speed of the slowest processor. Therefore, we know:
	\[
	\alpha\leq\frac{\consume(\platform)}{\speed_m}
	\]
	As the total work been done in all time interval of $\beta$ (where all processors are busy) is at most $C_i-\consume(\platform)$, we have 
	\begin{align*}
		\beta\sumspeed_m&\leq{C_i-\consume(\platform)}\\
	\Leftrightarrow~~~~~~	\beta&\leq\frac{C-\consume(\platform)}{\sumspeed_m}
	\end{align*}
	Hence we have 
	\begin{equation}\label{e:append-1}
	R=\alpha+\beta\leq \frac{\consume(\platform)}{\speed_m} + \frac{C_i-\consume(\platform)}{\sumspeed_m}
	\end{equation}
	Let $l_{\platform}$ denote the total workload (of all vertices) along path
	$\platform$, so we know $\consume(\platform)\leq{l_{\platform}}$.
	Since $L_i$ is the total workload of the longest path in the DAG,
	we know $l_{\platform}\leq{L_i}$, in summary we have $\consume(\platform) \leq L_i$
	and applying this to (\ref{e:append-1}) concludes the lemma.
\end{proof}

	 \begin{figure}[!htb]
	 	\centering
	 	\subfigure[A DAG task example.]{\includegraphics[width=1.6in]{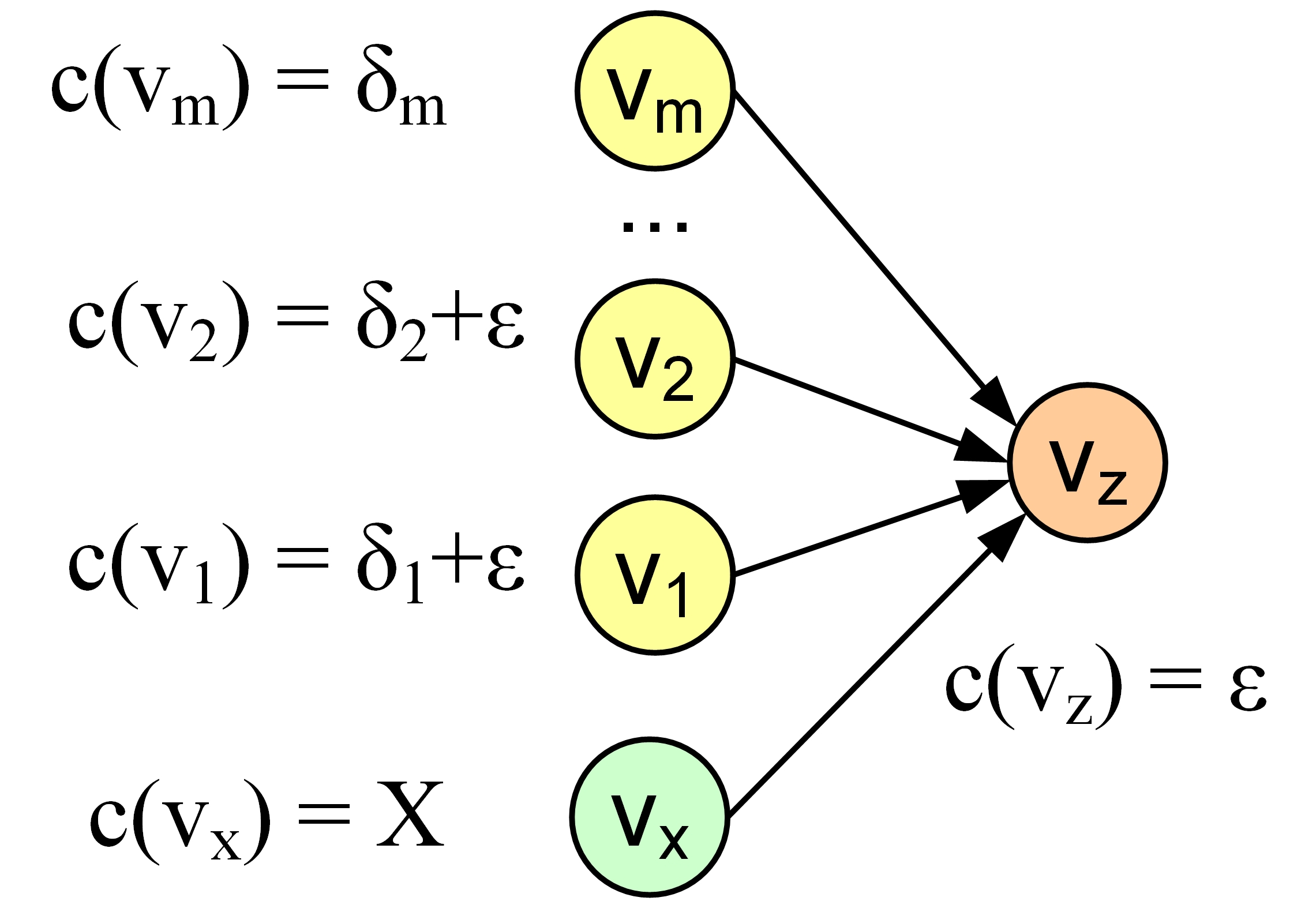}}
	 	\subfigure[Scheduling sequence.]{\includegraphics[width=1.7in]{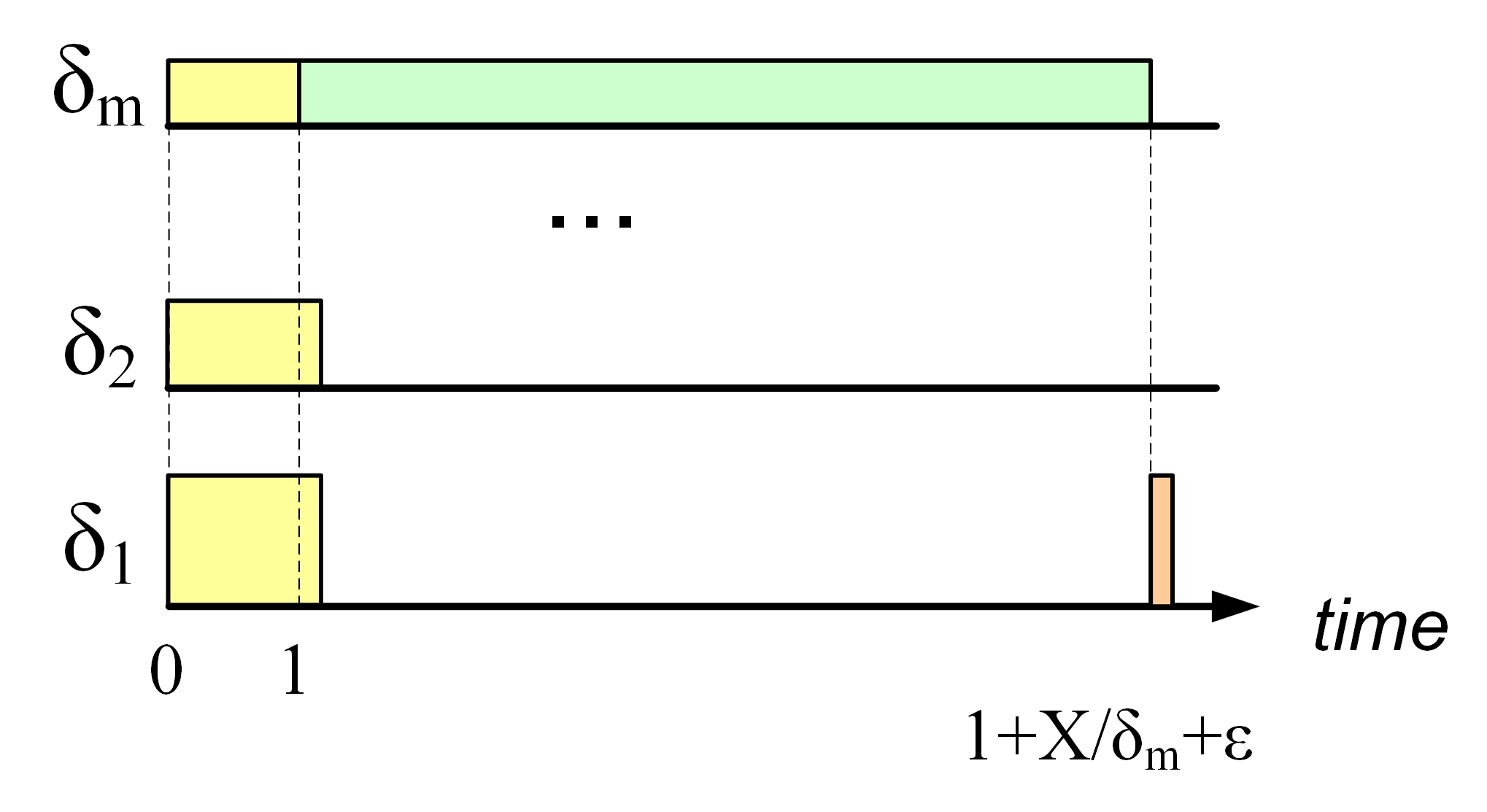}}
	 	\caption{Scheduling of a DAG task without migration.}
	 	\label{fig:tight_example}
	 \end{figure}

\begin{theorem}\label{l:tight}
The response time bound in (\ref{e:boundwithoutmigration})
is tight for weakly work-conserving scheduling algorithms.
\end{theorem}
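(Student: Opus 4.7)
\textbf{Proof Proposal for Theorem \ref{l:tight}.}

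My plan is to exhibit an explicit DAG task together with a weakly work-conserving schedule whose response time equals $\frac{L_i}{\speed_m}+\frac{C_i-L_i}{\sumspeed_m}$, the bound in~(\ref{e:boundwithoutmigration}). The structural idea is to split the DAG into two ``phases'' separated by a synchronization point: a \emph{parallel phase} that contributes exactly $C_i-L_i$ of workload and saturates every processor, followed by a \emph{chain phase} that contributes exactly $L_i$ of workload and runs entirely on the slowest processor. If each phase consumes the intended amount of time, then the sum of the two phase lengths is $\frac{C_i-L_i}{\sumspeed_m}+\frac{L_i}{\speed_m}$, matching the upper bound.

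Concretely, I would construct a DAG as follows. Start with a dummy source vertex of negligible cost. Its unique successor set consists of many small parallel vertices whose total workload is $C_i-L_i$; I make these vertices small enough that for each processor $P_x$ we can assemble a subset of total WCET exactly (or arbitrarily close to) $\speed_x \cdot \frac{C_i-L_i}{\sumspeed_m}$. All of these parallel vertices feed into a single synchronization vertex which then heads a chain of successors whose total workload is $L_i$; by construction, this chain is the unique critical path. The weakly work-conserving scheduler assigns the parallel vertices to processors according to the partition above, so that all processors become idle simultaneously at time $\frac{C_i-L_i}{\sumspeed_m}$. At that instant the chain's first vertex becomes eligible; because migration is forbidden and the algorithm need only obey the (weak) work-conserving rule, the scheduler is free to dispatch the head of the chain on the slowest processor $P_m$. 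Once placed there, the entire chain must execute sequentially on $P_m$ and therefore takes $\frac{L_i}{\speed_m}$, producing response time exactly equal to the claimed bound.

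The main obstacle I anticipate is the \emph{exactness} of the parallel-phase partition. For generic processor speeds $\{\speed_1,\dots,\speed_m\}$ the target per-processor workloads $\speed_x \cdot \frac{C_i-L_i}{\sumspeed_m}$ are typically irrational multiples of any fixed vertex size, so a finite DAG built from vertices of positive WCET cannot hit them exactly. I would address this in one of two ways: (i) choose the numerical parameters $(L_i, C_i, \{\speed_x\})$ of the example rationally so that an exact partition into unit-sized parallel vertices is possible (this is what the concrete example in Figure~\ref{fig:tight_example} appears to do, and it already suffices to show tightness for at least one instance); or (ii) present a family of DAGs parameterized by a granularity $\epsilon \to 0$ and argue that the response time converges to $\frac{L_i}{\speed_m}+\frac{C_i-L_i}{\sumspeed_m}$, which establishes tightness in the asymptotic sense standard for this kind of bound.

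Finally, I would verify that the exhibited schedule is indeed weakly work-conserving: no processor is idled while an eligible vertex exists (during the parallel phase every processor is busy, and during the chain phase the only eligible vertex is the one being executed sequentially on $P_m$), and no vertex migrates between processors. Combined with the matching response time, this yields the tightness claim.
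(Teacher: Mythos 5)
Your proposal is correct and follows essentially the same route as the paper: the paper also proves tightness by exhibiting a parameterized witness DAG and a migration-free, weakly work-conserving schedule (Figure \ref{fig:tight_example}) in which the critical path ends up executing on the slowest processor while the faster processors are wasted, and it argues that the ratio between the achieved response time and the bound in (\ref{e:boundwithoutmigration}) tends to $1$ as the scaling parameter $X$ grows. Your asymptotic fallback (the granularity-$\epsilon\to 0$ family) is exactly the sense of tightness the paper itself invokes, so the exact-partition issue you flag is not a gap.
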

\begin{proof}
The tightness is witnessed by the example in Figure 	\ref{fig:tight_example}. Let $R'$ denote its actual response time of the task in the scheduling sequence in Figure 
\ref{fig:tight_example}-(b), and $R''$ the response time bound in \ref{e:boundwithoutmigration}.
$R'/R''$ approaches $1$ as $X$ approaches infinity.
\end{proof}
As illustrated in Figure \ref{fig:tight_example}, if inter-processor migration is forbidden, the response time is almost the same as executing the critical path on the
slowest processor, while the faster processors
are all wasted.

\section*{Appendix-B: Detailed Explanation of The Example in Figure \ref{fig:deadlineassignment} }

\begin{itemize}
	\item At time $t = 0$, the only eligible vertex $v_1$ starts to occupy the \enquote{fastest} container task $\varphi_1$,
	and its absolute deadline is set to be $d_1 = 0 + 1/1 = 1$.
	
	\item At time $t=1$, $\varphi_1$ becomes empty, and $v_2$, $v_3$ and $v_4$ become eligible. Suppose we first select $v_4$ to execute on $\varphi_1$, with $d_1 = 1 + 4/1 = 5$. After that, Algorithm 
	\ref{a:dispatcher} is invoked again to assign a container task to the next eligible vertex $v_3$.
	If we encapsulate the entire $v_3$ into $\varphi_2$, then the resulting absolute deadline
	$1 + 3/0.5 = 7$ is later than the absolute deadline of a \enquote{faster} container task $\varphi_1$'s deadline $d_1 = 5$. Therefore, we must split $v_3$ into $v_3'$ and $v_3''$, so that putting $v_3'$ into $\varphi_2$ results in the same absolute deadline as
	$\varphi_1$, and $v_3''$ is put back to $S$ for further consideration. Next, Algorithm 
	\ref{a:dispatcher} is invoked again to assign the only eligible vertex $v_2$ to the remaining container task $\varphi_3$. Similarly, $v_2$ cannot be put into 
	$\varphi_3$ entirely, and we split it into $v_2'$ and $v_2''$ so that $d_3 = 5$.
	
	\item At time $t=5$, all the container tasks reach their absolute deadlines and thus becomes empty, and currently only $v_2''$ and $v_3''$ are eligible. 
	Suppose we first choose to put $v_2''$ into $\varphi_1$ with $d_1 = 5 + 4/1 = 9$, then put $v_3''$ into $\varphi_2$
	with $d_2 = 5 + 1/0.5 = 7$, which is smaller than $d_1$.

	\item At time $t=7$, $\varphi_2$ reaches its absolute deadline and thus become empty, and $v_5$ become eligible and should be put into  $\varphi_2$ ($\varphi_1$ is still being occupied). $v_5$ also needs to be split into two parts $c(v_5') = 
	c(v_5'') = 1$ to make $d_2 = d_1 = 9$.
	
	\item At time $t = 9$, both $\varphi_1$ and $\varphi_2$ become empty and $v_5''$ become eligible, which is put into $\varphi_1$ with $d_1 = 10$.
	
	\item At time $10$, the execution of $v_5''$ on  $\varphi_1$ is finished and the last vertex $v_6$ is put into the fastest container task $\varphi_1$.
	
	\item At time $11$, the entire task is finished.
\end{itemize}

\section*{Appendix-C: Illustration of \firstSFS\ and \secondSFS}

We use the following example to illustrate $\firstSFS$.
Assume a task set consists of 4 DAG tasks, where
the first three are heavy, with the minimal capacity requirements $\gamma_1 = 1.6$, $\gamma_2 = 1.6$ and $\gamma_3=1.5$, and one light task with density $\gamma_4 = 0.3$.
If scheduled by standard federated scheduling, each of the three heavy tasks requires $2$ dedicated processors, and in total $7$ processor are needed. If scheduled by \firstSFS, 
each of the heavy task only requires one dedicated processors,
and they generate three container tasks, with load bounds $0.6$, $0.6$ and $0.5$. These three container tasks, together with the light tasks with density $0.3$ is schedulable by partitioned EDF on $3$ processors, so in total $6$ processors are needed to schedule the task set using \firstSFS.

We use the same task set as above to illustrate \secondSFS. Now we assume the tasks are scheduled on $5$ processors.
Since each heavy task is granted one dedicated processor, the container tasks and light task share $2$ processors.
The load bound of the three generated container tasks and the density of the light tasks are
\begin{equation*}
\delta_1 = 0.6, ~\delta_2 = 0.6, ~\delta_3 = 0.5, ~ \delta_4 = 0.3 
\end{equation*} 
We can compute 
$\delta_k^*$ for each task using (\ref{e:lowerupper}):
\begin{equation}
\delta_1^* = \frac{3}{8}, ~
\delta_2^* = \frac{3}{8}, ~
\delta_3^* = \frac{1}{3}, ~
\delta_4^* = 0.3
\end{equation} 

The algorithm \Sched$^*$($S$, $\Omega$)
works as follows:

\begin{enumerate}
	\item  $\varphi_1$ is assigned to an empty processor $P_1$. 
	
	\item  $\varphi_2$ is assigned to the other empty processor $P_2$. 

	\item To assign $\varphi_3$, both processors are holding a task with the same load, so we choose any of them, say $P_1$, to accommodate $\varphi_3$. 
	Since $\delta_1^* + \delta_3^* = 3/8 + 1/3 <1$, we can assign $\varphi_3$ to $P_1$.
	After that, since $\delta_1 + \delta_3 = 0.6 + 0.5 > 1$, $P_1$ is moved from $\Omega$ to $\Psi$.
	\item There is only one processor $P_2$ in $\Omega$, and since $\delta_2^* + \delta_4^* = 3/8 + 0.3 < 1$, we can assign  $\varphi_4$ to $P_2$. After that,
 since $\delta_2 + \delta_4 = 0.6 + 0.3 < 1$, $P_2$ remains in $\Omega$.

\item After assigning all the four tasks, only $P_1$ is in $\Psi$. So we execute $\Scrape(P_1)$.
$w = \delta_1 + \delta_3 - 1 = 0.1$.
Since $\delta_1 - \delta_1^* = 0.3 - 3/8 > 0.1$, so we 
divide $\varphi_1$ into $\varphi_1^{'}$ and $\varphi_1^{''}$ where $\speed_1^{''}=0.1$ and $\speed_1^{'}=0.6-0.1=0.5$,
and put $\varphi_1^{''}$ in $SS$.

\item There is only one processor $P_2$ in $\Omega$, 
since 
\[
\summation_{\varphi_i\in\varphi(P_1)}\speed_i+\speed_1^{''}=0.6+0.3+0.1 = 1
\]
we put $\varphi_1^{''}$ is put in $P_2$. 
\end{enumerate}
Therefore, the final result of \Sched$^*$($S$, $\Omega$) is
\begin{align*} 
\processor_1: ~&\speed_1^{'}=\frac{1}{2}, ~~\speed_3=\frac{1}{2}\\ 
\processor_2: ~&\speed_2=\frac{3}{5},~~\speed_4=\frac{3}{10},~~\speed_1^{''}=\frac{1}{10}
\end{align*}

%
%
%
%
%
%
%
%
%

\section*{Appendix-D: Proof of Theorem \ref{t:secondoverhead}}

\begin{proof}
	Let a task execute on several dedicated processors and two fractional container tasks despite the unit containers with density of $\speed^{'}$ and $\speed^{''}$, $\speed^{'} \geq \speed^{''}$.
	By the proof of Theorem \ref{t:firstoverhead} we know
	the number of splitting occurred on the container task $\delta'$ is at most $N$. In the following we prove the number of splitting on the container task $\delta''$ is also at most $N$.
	
	We use $A$ to denote the set of vertices (including the parts of the divided vertices) executed on dedicated processors, 
	and use $B$ to denote the set of vertices (parts)
	executed on container task $\delta'$ with a deadline different from any deadlines of vertices (parts)
	on the dedicated processors.
	If a vertex $v$ is divided into two parts, $v'$, 
	executed on the container task $\delta'$, and $v''$, executed on dedicated processors. The migration of $v$ must happens at a time point aligned with some deadline
	on the dedicated processors, so we know $v'$ must not be in $B$. 
	Moreover, according to Algorithm \ref{a:dispatcher}, the vertices assigned to dedicated processors will not migrate to other processors.
	Therefore, the total number
	of elements in $A \cup B $ is at most $N$. 
	Therefore, the number of time points aligned with deadlines of vertices (parts) executed on the dedicated processors and container task $\delta'$ is bounded by $N$.	
	Since a splitting on container task $\delta''$ only occurs 
	at time points aligned with deadlines of vertices (parts) executed on the dedicated processors and container task $\delta'$, we can conclude the number
	of splitting on  container task $\delta''$ is also bounded by $N$.
	 
	 In summary, the total number of vertices splitting all the two container tasks is bounded by $2N$.
	 Since the vertices assigned to dedicated processors will not migrate to other processors. Therefore, the 
	 total number of newly generated vertices is bounded by $2N$.	 
\end{proof}

\end{document}